\theoremstyle{plain}
\newtheorem{theorem}{Theorem}[section]
\newtheorem{lemma}[theorem]{Lemma}
\newtheorem{proposition}[theorem]{Proposition}
\newtheorem{corollary}[theorem]{Corollary}
\theoremstyle{definition}
\newtheorem{example}[theorem]{Example}
\theoremstyle{remark}
\newtheorem{claim}{Claim}[theorem]
\begin{document}


\title{Relating timed and register automata \thanks{Work supported by the Future and Emerging Technologies (FET) programme within the Seventh Framework Programme for Research of the European Commission, under the FET-Open grant
agreement FOX, number FP7-ICT-233599.}
}
\author{
Diego Figueira
\institute{INRIA, ENS Cachan, LSV\\France}
\and 
Piotr Hofman 
\institute{Institute of Informatics\\ University of Warsaw\\Poland}
\and 
S{\l}awomir Lasota
\institute{Institute of Informatics\\ University of Warsaw\\Poland}
}

\maketitle
 
\begin{abstract}
Timed automata and register automata are well-known  models of computation over timed and data words respectively. The former has \emph{clocks} that allow to test the lapse of time between two events, whilst the latter includes \emph{registers} 
that can store data values for later comparison. Although these two models behave in appearance  differently, several decision problems have the same (un)decidability and complexity results for both models. As a prominent example, emptiness is decidable for alternating automata
with one clock or register, both with non-primitive recursive complexity. This is not by chance.

This work confirms that there is indeed a tight relationship between the two models. We show that a run of a timed automaton can be simulated by a register automaton, and conversely that a run of a register automaton can be simulated by a timed automaton. Our results allow to transfer  complexity and decidability results back and forth between these two kinds of models. We justify the usefulness of these reductions by obtaining new results on register automata.
\end{abstract}

\newcommand{\Q}{\mathbb{Q}}
\newcommand{\Xx}{{\cal X}}
\newcommand{\Cc}{{\cal C}}
\newcommand{\Dd}{{\cal D}}
\newcommand{\confleq}{{\preceq}}
\newcommand{\nat}{{\mathbb N}}
\newcommand{\kol}{{\mathbb C}}
\newcommand{\porz}{\preceq}
\newcommand{\Data}{{\mathbb D}}
\newcommand{\doeq}{\preceq}
\newcommand{\doless}{\prec}
\newcommand{\odeq}{\succeq}
\newcommand{\odless}{\succ}
\newcommand{\lar}[1]{{\mathtt{lar}}(#1)}
\newcommand{\removed}[1]{}
\newcommand{\Alf}{\mathbb{A}}
\newcommand{\trans}[1]{\stackrel{#1}{\longrightarrow}}
\newcommand{\aut}{{\cal A}}
\newcommand{\baut}{{\cal B}}
\newcommand{\powset}[1]{{\cal P}(#1)}
\newcommand{\test}{\text{test}}
\newcommand{\tra}[4]{#1,#2,#3 \ \mapsto \ #4}
\newcommand{\Acc}{\text{F}}
\newcommand{\wqo}{\leq}
\newcommand{\wqoCc}{\sqsubseteq}
\newcommand{\rel}{\trans{}}
\newcommand{\razy}{\otimes}
\newcommand{\elt}{x}
\newcommand{\TS}{{\cal T}}
\newcommand{\ETS}{{\cal T'}}
\newcommand{\emptyw}{\varepsilon}
\newcommand{\eqperm}{\equiv}
\newcommand{\dataeq}{\sim}
\newcommand{\upto}[1]{#1^{\eqperm}}
\newcommand{\ps}[1]{{\cal P}(#1)}

\newcommand{\clo}{{\cal C}}
\newcommand{\reg}{{\cal R}}
\newcommand{\ttt}{\mathit{tt}}
\newcommand{\fff}{\mathit{ff}}
\newcommand{\deno}[1]{\lbrack #1 \rbrack}
\newcommand{\rall}{{\mathbb R}}
\newcommand{\rplus}{{\mathbb R}_+}
\newcommand{\pair}[2]{(#1,#2)}
\renewcommand{\d}{\delta}
\newcommand{\s}{\sigma}
\newcommand{\trule}[4]{\langle #1, #2, #3, #4 \rangle}
\newcommand{\tr}[4]{{#1, #2, #3 \ \mapsto \ #4}}
\newcommand{\trarr}[4]{#1, & #2, & #3 & \ \mapsto \ & #4}
\newcommand{\val}{{\mathbf v}}
\newcommand{\incr}[2]{{{#1} {+} {#2}}}
\newcommand{\reset}[2]{{{#1}[#2 := 0]}}
\newcommand{\set}[3]{{{#1}[#2 := #3]}}
\newcommand{\gamepar}[3]{{G^{#1}_{#2,#3}}}
\newcommand{\tgame}{\gamepar{\mathrm{time}}{\aut}{w}}
\newcommand{\dgame}{\gamepar{\mathrm{data}}{\aut}{w}}
\newcommand{\zero}{\val_0}
\newcommand{\conf}[2]{\pair{#1}{#2}}
\newcommand{\adam}{Adam\xspace}
\newcommand{\ewa}{Eve\xspace}
\newcommand{\tests}{\mathtt{Tests}(\reg)}

\newcommand{\constr}{\mathtt{Constr}(\clo)}    
\newcommand{\pboo}[1]{{\cal B}^+(#1)}
\newcommand{\tik}{\checkmark}
\newcommand{\odb}[1]{\text{odb}(#1)}
\newcommand{\modb}[1]{\text{db}(#1)}
\newcommand{\mtb}[1]{\text{tb}(#1)}
\newcommand{\podloga}[1]{\lfloor #1 \rfloor}
\newcommand{\ulamek}[1]{\widehat{#1}}
\newcommand{\lang}[1]{{\cal L}(#1)}
\newcommand{\lmodb}{{\cal L}_\text{modb}}
\newcommand{\lmtb}{{\cal L}_\text{mtb}}
\newcommand{\mnote}[1]{\marginpar{\setlength{\baselineskip}{.7em}\begin{flushleft}{\tiny #1}\end{flushleft}}}
\newcommand{\pspace}{\textsc{PSpace}\xspace}
\newcommand{\midd}{~~\mid~~}
\newcommand{\coloneqq}{\mathrel{\mathop:}=}
\newcommand{\Coloneqq}{\mathrel{{\mathop:}{\mathop:}}=}
\newcommand{\tup}[1]{\langle #1 \rangle}
\newcommand\+[1]{\mathcal{#1}}
\newcommand\sset[1]{\{#1\}}

\section{Introduction}

Timed automata~\cite{AD94} and register automata
(known originally as \emph{finite-memory automata})~\cite{KF94} 
are two widely studied models of computation.
Both models extend finite automata with a kind of storage:
\emph{clocks} in the case of timed automata, capable of measuring the amount of time elapsed from the moment they were reset; and \emph{registers} in the case of register automata,
capable of storing a data value for future comparison.
In this paper we are interested in decidability and complexity 
of standard decision problems for both models of automata. In particular, we focus on the problems of \emph{nonemptiness} (Does an automaton $\aut$ accept some word?), \emph{universality} (Does an automaton $\aut$ accept all words?), and \emph{inclusion} (Are all words accepted by an automaton $\aut$ also accepted by an automaton $\baut$?).

The emptiness problem for nondeterministic
timed or register automata is \pspace-complete~\cite{AD94,DL09}. It becomes undecidable for \emph{alternating} automata of both kinds~\cite{LW08,OW05,DL09}, as soon as they have at least two clocks or registers~\cite{AD94,DL09}.
Even the universality problem was shown undecidable for nondeterministic 
timed and register automata, respectively, with two clocks or registers~\cite{AD94,NSV04,DL09}. 
A break-through result of~\cite{OW04} showed that universality becomes decidable
for one clock timed automata.
Later, the emptiness problem for one clock alternating
timed automata was shown decidable. However, the computational complexity of this problem has been 
found to be non-primitive recursive~\cite{LW08,OW05}.
Analogous (independent) results appeared for the other model: emptiness is decidable and non-primitive recursive for one register alternating automata~\cite{DL09}.
For infinite words, both one clock and one register alternating automata are undecidable, as well as the universality problem of nondeterministic one clock/register automata~\cite{LW08,ADOW05,DL09}.
The analogies between the two models appear to some extent 
also at the level of proof methods. The decidability proofs for one clock/register alternating automata are based on similar well-structured transition systems; and both non-primitive recursive lower bounds are obtained by simulation of a kind of lossy model of computation.
All these  analogies between the two models rise a natural question about the relationship between them. This paper is an attempt to answer this question.

Register automata were traditionally investigated over an unordered data domain. However, our model works on a data domain equipped with a
total order. This is a necessary extension, that allows to simulate runs of timed automata, and to have a tight equivalence between the timed and the register models. Roughly speaking, the main contribution of this paper is to show that 
timed automata and register automata over an ordered data domain are equivalent models,
as far as one concerns complexity and decidability of decision problems.

On a more technical level, we show that a run of a timed/register automaton
on a timed/data word $w$ may be simulated by a run of a register/timed automaton over a specially instrumented transformation of $w$, that we call \emph{braid}. 
The reductions we exhibit are performed in exponential time, and keep the number of clocks equal to the number of registers, and  preserve the mode of computation (alternating, nondeterministic, deterministic).
Additionally, we show that the complement of all braids is recognizable by
a nondeterministic one clock/register automaton.
These results lead straightforwardly to reductions from decision problems
for one class of automata to analogous problems for the other class, 
thus allowing us to carry over (un)decidability results and derive complexity bounds in both directions. 

As an application, our simulations allow to obtain known results on timed (or register) models as simple consequences of results on register (or timed) models. 
These include, e.g., that over finite words the emptiness problem of alternating 1 register automata is decidable~\cite{DL09}. 
In fact, our reductions yield decidability of the model extended with a 
total order over the data domain.
As two further examples of application, we show how the following decidability results for timed 
automata can be transferred to the class of register automata:
\begin{itemize}
\item decidability of the inclusion problem between a nondeterministic (many clocks) automaton 
and an alternating one clock automaton (shown  in~\cite{LW08});
\item decidability of the emptiness problem for an alternating (many clocks) automaton over a \emph{bounded} time domain
(shown in~\cite{JORW10}).
\end{itemize}
In this paper we limit our study to \emph{finite} timed and data words, as the first step in the general program of relating the timed and data settings.


\section{Preliminaries}

$\rplus$ denotes the set of non-negative real numbers.
Let $\pboo{X}$ denote the set of all positive boolean formulas
over the set $X$ of propositions, i.e., the set generated by:
\begin{gather*}
  \phi \quad  \Coloneqq \quad x \midd \phi_1 \land \phi_2 \midd
\phi_1 \lor \phi_2 \qquad (x\in X) .
\end{gather*}
We fix a finite alphabet $\Alf$ for the sequel.
We recall the definitions of alternating timed and register automata~\cite{LW08,DL09}.
To avoid inessential technical complications,
we have deliberately chosen a slightly unusual definition of register automata, 
equivalent in terms of expressible power to the one defined in~\cite{DL09},
but as similar as possible to the definition of timed automata.

\subsection{Alternating timed automata}
\label{s:ata}

By a {\em timed word} over $\Alf$ we mean a finite sequence
\begin{equation}
\label{e:tw}
w ~=~ \pair{a_1}{t_1}\, \pair{a_2}{t_2}\, \ldots \,\pair{a_n}{t_n}
\end{equation}
of pairs from $\Alf \times \rplus$, with $t_1 < t_2 < \ldots < t_n$.
Each \emph{time stamp} $t_i$ 
denotes the amount of time elapsed since the beginning of the word.
For simplicity, we prefer to work with strictly monotonic timed words,
although the analogous results would hold for weakly monotonic words as well.

For a given finite set $\clo$ of {\em clock variables} (or \emph{clocks} for short), consider the set $\constr$ of clock constraints
$\sigma$ defined by
\begin{gather*}
  \sigma \quad \Coloneqq\quad  c < k \midd c \leq k \midd \sigma_1  \land \sigma_2 \midd \neg \sigma,
\end{gather*}
where
$k$ stands for an arbitrary nonnegative integer constant,
and $c \in \clo$.
For instance, note that $\ttt$ (standing for \emph{always true}), or $c = k$,
can be defined as abbreviations.
Recall also that the difference constraints $c_1 - c_2 \leq k$, typically allowed in timed
automata, may be easily eliminated
(however, the size of automaton may increase exponentially). 
A \emph{valuation} of the clocks is an element $\val\in (\rplus)^{\clo}$. 
Given a constraint $\sigma$ we write  $\deno{\sigma}$ to denote the set of clock valuations  satisfying the constraint, $\deno{\sigma} \subseteq (\rplus)^{\clo}$.

An {\em alternating timed automaton} over $\Alf$ consists of:
a finite set  of states $Q$, a distinguished initial state $q_0 \in Q$,  
a set of accepting states $F \subseteq Q$,
a finite set $\clo$ of clocks, and 
a finite partial transition function 
$$\d: Q\times \Alf \times\constr \stackrel{\cdot}{\to} \pboo{Q \times \powset{\clo}},$$ 
subject to the following additional restriction:
\begin{description}
\item[(Partition)] For every state $q$ and label $a$, $\{\deno{\sigma} :
    \d(q,a,\s) \text{ is defined}\}$ is a (finite) partition of
    $(\rplus)^\clo$.
\end{description}

The (Partition) condition does not limit the expressive power of
automata. We impose it because it permits to give a nice symmetric
semantics for the automata as explained below. We will write
$\tr{q}{a}{\sigma}{b}$ instead of $\d(q, a, \sigma) = b$.

To define an execution of an automaton, we will need two operations
on valuations $\val \in (\rplus)^{\clo}$.
A valuation $\incr{\val}{t}$, for $t \in \rplus$, is obtained from $\val$ by increasing the value of each clock by $t$.
A valuation $\reset{\val}{X}$, for $X \subseteq \clo$, is obtained
by reseting to zero the value of all clocks from $X$.

For an alternating timed automaton $\aut$ and a timed word $w$ as in
(\ref{e:tw}), we define the \emph{acceptance game $\tgame$} between two
players \adam\ and \ewa.  Intuitively, the objective of \ewa\ is to
accept $w$, while the aim of \adam\ is the opposite.  A play starts at
the initial configuration $\conf{q_0}{\zero}$, where $\zero : \clo \to
\rplus$ is a valuation assigning $0$ to each clock variable.  It
consists of $n$ phases.  The \mbox{$(k{+}1)$-th} phase starts in
$\conf{q_k}{\val_k}$, and ends in some configuration
$\conf{q_{k{+}1}}{\val_{k{+}1}}$ proceeding as follows.  Let
$\bar{\val} \coloneqq \val_k + t_{k{+}1} - t_k$ (for $k = 0$, $t_0$ is deemed to be $0$).  
Let $\s$ be the unique constraint
such that $\bar\val$ satisfies $\s$ and $\phi=\d(q_k,a_{k+1},\s)$ is
defined.
Existence and uniqueness of such $\s$ is implied by the (Partition)
condition.
Now the outcome of the phase is determined by
the formula $\phi$. There are three cases:
\begin{itemize}
\item $\phi = \phi_1 \land \phi_2$: \adam chooses one of subformulas $\phi_1$,
$\phi_2$ and the
play continues with $\phi$ replaced by the chosen subformula;
\item
$\phi = \phi_1 \lor \phi_2$: dually, \ewa chooses one of subformulas;
\item
$\phi = (q, X) \in Q \times \powset{\clo}$: the phase ends
with the result
$\conf{q_{k{+}1}}{\val_{k{+}1}} := \conf{q}{\reset{\bar{\val}}{X}}$.
A new phase is starting from this configuration if
$k{+}1 < n$.
\end{itemize}
The winner is \ewa\ if $q_n$ is accepting ($q_n \in F$), otherwise
\adam\ wins.

Formally, a play is a finite sequence of consecutive game positions  
of the form
$\tup{ k, q, \val }$ or $\tup{ k, q, \phi }$, where $k$ is
the phase number, $\phi$ a positive boolean formula, $q$ a state and $\val$ a  
valuation.
A \emph{strategy} of \ewa is a mapping which assigns to each such sequence ending in \ewa's position a next move of \ewa. 
A strategy is \emph{winning} if \ewa wins whenever she applies this strategy.

The automaton $\aut$ {\em accepts} $w$ if{f} \ewa has a winning
strategy in the game $\tgame$.
By $\lang{\aut}$ we denote the language of all timed words $w$ accepted by
$\aut$.
%
%

\subsection{Alternating register automata}
\label{s:a1ra}

Fix an infinite data domain $\Data$.
\emph{Data words} over $\Alf$ are finite sequences
\begin{equation}
\label{e:dw}
w ~=~ \pair{a_1}{d_1} \pair{a_2}{d_2} \ldots \pair{a_n}{d_n}
\end{equation}
of pairs from $\Alf \times \Data$.  
Additionally, assume a total 
order $\doeq$ over $\Data$.
The order may be chosen arbitrarily, and our results apply to all total orders.

For a given finite set $\reg$ of \emph{register names} (or \emph{registers} for short),
consider the set $\tests$ of register tests $\sigma$ defined by
\begin{gather*}
  \sigma \quad \Coloneqq \quad \doless r \midd \doeq r \midd
  \sigma_1 \land \sigma_2 \midd \neg \sigma,\qquad \text{where $r \in \reg$.}
\end{gather*}
Each test $\sigma$ refers to registers and the current data, thus $\sigma$ denotes a subset 
$\deno{\sigma}$ of $\Data^\reg \times \Data$. E.g., $\deno{\doless r}$ means that the current data value
is strictly smaller than the value stored in register $r$.
The equality `$= r$' and inequality `$\neq r$' tests may be defined as abbreviations.


An \emph{alternating register automaton} over $\Alf$ consists of:
a finite set $Q$ of states with a distinguished initial state $q_0 \in Q$ and 
a set of accepting states $F \subseteq Q$,
a finite set $\reg$ of registers, and a transition function 
\[
\d : Q \times \Alf \times \tests \to \pboo{Q {\times} \powset{\reg}}
\]
subject to the following additional restriction:
\begin{description}
\item[(Partition)] For every $q$ and $a$, the set $\{\deno{\sigma} :
    \d(q,a,\sigma) \text{ is defined}\}$ gives a (finite) partition of
    $\Data^\reg\times\Data$.
\end{description}

Register automata are typically defined over unordered data domain.
For the purpose of relating the existing models, distinguish a subclass of
register automata that only use equality $=_r$ and inequality $\neq_r$ tests;
we call them \emph{order-blind automata}. Order-blind automata correspond to the model defined in~\cite{DL09}.
As usual, we will write $\tra{q}{a}{t}{\phi}$ instead of $\d(q, a, t) = \phi$.

Given a data word $w$ as in~\eqref{e:dw}, it is accepted or not by $\aut$ depending on
the winner in the \emph{acceptance game} $\dgame$, played by \ewa\ and \adam\ 
similarly as for timed automata.
We assume for convenience that as the very first step the automaton loads 
the current data into all registers (in this way we avoid undefined values in registers).
The initial configuration is thus $\conf{q_0}{\zero}$, where $\zero : \reg \to
\Data$ assigns $d_1$ to each register.
The play consists of $n$ phases.  The \mbox{$(k{+}1)$-th} phase starts in
$\conf{q_k}{\val_k}$ and proceeds as follows.  
Let $\sigma$ be the unique test
such that $\val_k$ satisfies $\sigma$ and $\phi=\d(q_k,a_{k+1},\sigma)$ is
defined (recall the (Partition) condition).
Now the outcome of the phase is determined by
the formula $\phi$. 
The logical connectives are dealt with analogously as in case of timed automata.
When the play reaches an atomic formula $\phi = (q, X) \in Q \times \powset{\reg}$,
the phase ends with the result
$\conf{q_{k{+}1}}{\val_{k{+}1}} \coloneqq \conf{q}{\set{\val_k}{X}{d_{k+1}}}$, where
$\set{\val}{X}{d}$ differs from $\val$ by setting $\val(r) = d$ for all $r \in X$.
If $k{+}1 < n$, the game continues with a new phase starting in $\conf{q_{k{+}1}}{\val_{k{+}1}}$.

The winner is \ewa if $q_n$ is accepting ($q_n \in F$), otherwise
\adam wins.
The automaton $\aut$ {\em accepts} $w$ if{f} \ewa\ has a winning
strategy in $\dgame$.
Overloading the notation, $\lang{\aut}$ denotes the language of 
all data words accepted by $\aut$.

\removed{  
It will be however more convenient to characterize accepting runs in terms of 
a sequence of configurations, ending in an accepting one.
A \emph{configuration} of the automaton is a finite 
set $c$ of pairs $(q, d) \in Q \times \Data$, each representing
an automaton in state $q$ with register storing the data value $d$.
In particular, an initial configuration is $c_0 = \{ (q_0, d) \}$, where $q_0$ is the initial state
of $\aut$ and data value $d$ is arbitrary.
Given a configuration $c$ and a pair $(a, d) \in \Alf \times \Data$, a possible next configuration 
is any configuration $c'$ obtained as a result of the following nondeterministic process.
(For convenience, we assume that right-hand side formulae in all transitions~\eqref{e:tra} of $\aut$
are in disjunctive normal form.)

\begin{quote}
Start with the empty $c'$.
For every pair $(q, e) \in c$, do the following:
\begin{itemize}
\item
Let $\phi$ be the right-hand side of the unique transition~\eqref{e:tra} 
applicable for $q, a$, depending whether $d = e$ or $d \neq e$.
\item
Choose some disjunct of $\phi$, say 
$(q_1, \load) \land \ldots (q_m, \load) \land \bar{q}_1 \land \dots \land \bar{q}_l$.
\item
Add to $c'$ the set $\{ (q_1, d) \ldots (q_m, d) \} \cup \{ (\bar{q}_1, e) \ldots (\bar{q}_l, e) \}$.
\end{itemize}
\end{quote}
We write $c \trans{(a, d)} c'$ if $c'$ may be obtained from $c$ according to the above procedure.
A data word $w \razy \mu$ is accepted, $w = a_1 \ldots a_n$,
$\mu = d_1 \ldots d_n$, if there is a sequence
\[
c_0 \trans{(a_1, d_1)} c_1 \ldots \trans{(a_n, d_n)} c_n
\]
such that $c_n$ is \emph{accepting}, i.e., only accepting states of $\aut$
appear in $c_n$.
} 

\paragraph{Deterministic, nondeterministic, and alternating.}
For both timed and register automata, we distinguish 
a subclass of nondeterministic automata as those that do not
use conjunction in the image of transition function, and a subclass of deterministic
automata that do not use disjunction either. The term alternating automata refers then to the full, unrestricted class.

\subsection{Isomorphisms}

By a \emph{time isomorphism} we mean any order-preserving bijection $f$ over the 
interval $[0, 1)$ (this implies $f(0) = 0$ in particular). 
The intuition is that an isomorphism will not be applied  to
a time stamp $t$, but to its fractional part only (that we write $\ulamek{t}\,$), keeping the 
integer part $\podloga{t}$ unchanged. 

Given a time isomorphism $f$, we apply it to a timed word $w=(a_1,t_1) \dotsb (a_n,t_n)$ as follows:
\begin{gather*}
f(w) = \pair{a_1}{\podloga{t_1} + f(\ulamek{t_1})} \pair{a_2}{\podloga{t_2} + f(\ulamek{t_2})} \dotsb
\pair{a_n}{\podloga{t_n} + f(\ulamek{t_n})}
\end{gather*}

\begin{proposition}
\label{p:tperm}
Languages recognized by alternating timed automata are closed under time isomorphism: 
for any timed automaton $\+A$ and a time isomorphism $f$, $\+A$ accepts a timed word $w$ if{f} $\+A$ accepts $f(w)$.
\end{proposition}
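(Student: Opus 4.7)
My plan is to lift the time isomorphism $f : [0,1) \to [0,1)$ to a function $F : \rplus \to \rplus$ defined by $F(t) \coloneqq \podloga{t} + f(\ulamek{t})$, so that $f(w) = \pair{a_1}{F(t_1)} \dotsb \pair{a_n}{F(t_n)}$. Note that $F$ preserves integer parts, agrees with $f$ on $[0,1)$, fixes $0$, and is a strictly order-preserving bijection of $\rplus$ (hence $f(w)$ is indeed a timed word, and $f^{-1}$ is itself a time isomorphism).

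The core technical ingredient is the following claim, which I would establish by a short case analysis on whether $\ulamek{t} \geq \ulamek{s}$: for every $s \leq t$ in $\rplus$, every nonnegative integer $k$, and every $\bowtie \in \sset{<,\leq}$,
\begin{gather*}
t - s \bowtie k \quad\Longleftrightarrow\quad F(t) - F(s) \bowtie k.
\end{gather*}
Writing $t = m + x$ and $s = n + y$ with $x, y \in [0, 1)$, the left-hand inequality unfolds to the disjunction ``$m - n < k$, or $m - n = k$ and $x \bowtie y$''. Only the integer difference $m - n$ (unchanged by $F$) and the ordering of the fractional parts $x$ and $y$ (preserved by $f$, which is strictly order-preserving) occur, so the condition is invariant under $F$.

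I would then conclude the proposition as follows. At step $k{+}1$ of any play of $\tgame$, each entry of the ``pre-transition'' valuation $\bar\val$ equals $t_{k+1} - r$, where $r$ is either $0$ or a previous time stamp $t_j$ at which the corresponding clock was last reset. In the parallel play over $f(w)$ the analogous entry is $F(t_{k+1}) - F(r)$. By the claim, both values satisfy exactly the same atomic clock constraints $c < k$ and $c \leq k$; hence the same $\sigma \in \constr$ is selected by (Partition) and the same $\phi = \d(q_k, a_{k+1}, \sigma)$ is obtained. The two acceptance games therefore have isomorphic trees of positions, with matching \adam- and \ewa-positions and identical winning condition $q_n \in F$. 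A winning strategy for \ewa transports from one game to the other, and back via $f^{-1}$.

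The only delicate step is the arithmetic claim; everything else is routine bookkeeping on the game structure.
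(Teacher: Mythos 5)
Your proof is correct. The paper states Proposition~\ref{p:tperm} without any proof, so there is nothing to compare against; your argument --- lifting $f$ to $F(t)=\podloga{t}+f(\ulamek{t})$, checking that satisfaction of each atomic constraint $c\bowtie k$ by a clock value $t-s$ depends only on the integer difference $\podloga{t}-\podloga{s}$ and the relative order of the fractional parts (both invariant under $F$), and then transporting \ewa's winning strategies between the two acceptance games, with the converse direction via the time isomorphism $f^{-1}$ --- is exactly the standard argument the authors leave implicit, and it is sound.
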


We say that two data words 
$w ~=~ \pair{a_1}{d_1} \pair{a_2}{d_2} \ldots \pair{a_n}{d_n}$ and
$v ~=~ \pair{a_1}{e_1} \pair{a_2}{e_2} \ldots \pair{a_n}{e_n}$ 
with the same string projection $a_1 a_2 \ldots a_n$
are \emph{data isomorphic} if 
for all $i, j \in \{1 \ldots n\}$, $d_i \doeq d_j$ if{f} $e_i \doeq e_j$.
%
\begin{proposition}
\label{p:dperm}
Languages recognized by alternating register automata are closed under data
isomorphism: for any register automaton $\+A$ and a two data isomorphic words
$w$ and $v$,   
$\+A$ accepts $w$ if{f} $\+A$ accepts $v$.
\end{proposition}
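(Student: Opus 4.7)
The plan is to mimic the argument underlying Proposition~\ref{p:tperm}, using the fact that every register test only inspects the relative order between the current data value and the values currently stored in registers. Since data isomorphism of $w$ and $v$ means precisely that the map $\pi$ sending $d_i$ to $e_i$ is a well-defined, order-preserving bijection on the set $\{d_1,\dots,d_n\}$, applying $\pi$ to valuations preserves the validity of every test in $\tests$.

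Concretely, I first extend $\pi$ to an order-preserving bijection from $\Data$ to $\Data$; only its restriction to $\{d_1,\dots,d_n\}$ matters, so such an extension exists by the total order assumption on $\Data$. Given a winning strategy $\sigma$ of \ewa in $\dgame$, I construct a strategy $\sigma'$ in $\gamepar{\mathrm{data}}{\aut}{v}$ as follows: whenever $\sigma$, confronted with a game position $\tup{k,q,\phi}$ reached via valuation $\val$, chooses a disjunct of $\phi$, the strategy $\sigma'$ makes the same choice whenever it finds itself at the position $\tup{k,q,\phi}$ reached via the valuation $\pi \circ \val$. Symmetrically I translate \adam's choices by the identity. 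The crucial claim, proved by induction on the phase number $k$, is the following invariant: if in $\dgame$ a phase starts in configuration $\conf{q_k}{\val_k}$, then in the parallel play on $v$ under $\sigma'$ the same phase starts in $\conf{q_k}{\pi \circ \val_k}$.

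The base case holds because both initial valuations are constant, assigning $d_1$ respectively $e_1 = \pi(d_1)$ to every register. For the inductive step, assume the invariant at phase $k$. The test $\s$ selected by the (Partition) condition in the game on $w$ is the unique $\s$ such that $(\val_k,d_{k+1}) \in \deno{\s}$. Since $\s$ is built from atoms $\doless r$ and $\doeq r$, and since $d_{k+1} \doless \val_k(r)$ iff $e_{k+1} = \pi(d_{k+1}) \doless \pi(\val_k(r)) = (\pi\circ\val_k)(r)$, and analogously for equality, the pair $(\pi\circ\val_k, e_{k+1})$ also lies in $\deno{\s}$. So the same $\phi = \d(q_k,a_{k+1},\s)$ is triggered in both plays; the translated choices lead to the same atom $(q_{k+1},X)$, and the resulting valuation in the play on $v$ is $\set{\pi\circ\val_k}{X}{e_{k+1}} = \pi \circ \set{\val_k}{X}{d_{k+1}}$, as required.

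After $n$ phases the two plays end in the same state $q_n$, so \ewa wins in one iff she wins in the other; hence $\sigma'$ is winning whenever $\sigma$ is. By symmetry (using $\pi^{-1}$) the converse holds as well, so $\aut$ accepts $w$ iff $\aut$ accepts $v$. The only mildly delicate point is keeping the bookkeeping of the game positions synchronous between the two plays, but since the game tree is determined by the transition function and the sequence of labels, both of which coincide in $w$ and $v$, this is routine rather than an obstacle.
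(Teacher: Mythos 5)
The paper states Proposition~\ref{p:dperm} without proof, treating it as a routine consequence of the fact that register tests only probe the order relation between the current datum and the stored values; your argument supplies exactly the game-theoretic proof one would expect, and its core --- the phase-by-phase invariant that the play on $v$ tracks the play on $w$ with all register contents transported by the order-preserving bijection $\pi : d_i \mapsto e_i$, so that the (Partition) condition selects the same test $\s$ and hence the same formula $\phi$ in both games --- is correct.

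One sentence should be removed, however: the claim that $\pi$ extends to an order-preserving bijection of all of $\Data$ is false for general total orders, and the paper explicitly insists that the results hold for an arbitrary total order on $\Data$. For instance, with $\Data = \mathbb{N}$ under the usual order, $d_1 = 0$ and $e_1 = 5$, the only order-preserving bijection of $\mathbb{N}$ onto itself is the identity, so no extension of $\pi$ exists. This does not damage your proof, because the extension is never actually used: as you note, only the restriction of $\pi$ to $\{d_1,\dots,d_n\}$ matters, and your own invariant shows that every valuation arising in the play on $w$ takes values in this finite set (the initial valuation is constantly $d_1$, and each update stores some $d_{k+1}$), so $\pi\circ\val_k$ is always defined. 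Delete the extension step, work with the finite partial bijection throughout, and the argument is complete as written.
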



\section{Braids}

An idea which is crucial to obtain reductions in both directions is an
instrumentation of timed and data words, 
to be defined in this section, that enforces a kind of `braid' structure in a word. 

\paragraph{Data braids.}

The \emph{data projection} of 
$w = \pair{a_1}{d_1} \ldots \pair{a_n}{d_n} \in (\Alf \times \Data)^*$
is $d_1  \ldots d_n \in \Data^*$.
We define the \emph{ordered partition} of a data word $w$ as a factorization 
\begin{equation}
\label{e:partition}
w_1 \cdot \ldots \cdot w_k = w
\end{equation}
into data words $w_1, \dotsc, w_k$ such that each $w_i$
is a maximal subword ordered with respect to $\doless$. 
In other words: all the data values of any $w_i$ are strictly increasing, and 
for all $i<k$, the first data value of $w_{i+1}$ is less or equal to the last one of $w_i$. 
It follows that for every data word there is a unique ordered partition.

A data word $w$ is a \emph{data braid} if{f} 
\begin{itemize}
\item The minimum data value of $w$ appears at the first position,  and
\item Its ordered partition is such that the data projection of  each factor $w_i$ is a substring of the data projection of $w_{i+1}$.  In this context, we say that $v$ is a \emph{substring} of $v'$ if{f}  $v$ is the result of removing some (possibly none) positions from  $v'$.
\item We can partition the alphabet $\Alf = \Alf_1 \cup \Alf_2$ so that a position $i$ of $w$ is labeled with a symbol of $\Alf_2$ if{f} $d_i = d_1$. 
We call a \emph{marked position} to any  $\Alf_2$-labeled position of the word. Note that the marked positions are those  starting some factor of the ordered partition of $w$.
\end{itemize}

\begin{example}
\label{e:odb}
The word $w$ below is not an ordered data braid since its ordered partition does not satisfy the substring requirement. Neither is $v$, since the minimum element does not appear at the first position.
In this example as well as in the following ones we use natural number as exemplary data value. 
\begin{align*}
w &\quad=\quad  \pair{c}{1}~~\cdot~~\pair{d}{1} \pair{a}{4} 
  \pair{b}{8}~~\cdot~~\pair{c}{1} \pair{b}{2} \pair{a}{4} \pair{a}{8}
  \pair{b}{9}~~\cdot~~\pair{c}{1},\\
v &\quad=\quad  \pair{c}{3}~~\cdot~~\pair{d}{2} \pair{a}{3} 
  \pair{b}{8}~~\cdot~~\pair{c}{2} \pair{b}{3} \pair{a}{5} \pair{a}{8}.
\end{align*}
In the case of $w$, the substring requirement is fulfilled if, e.g., the last element $\pair{c}{1}$ is removed,
or when $w$ is extended with $ \pair{b}{2} \pair{a}{4}  \pair{b}{5} \pair{a}{8} \pair{b}{9}$; in both cases
$\Alf_1 = \{a,b \}$ and $\Alf_2 = \{c, d \}$. 
\end{example}

\paragraph{Timed braids.}

Intuitively, the braid condition for timed words is analogous to that of
ordered data braids if one considers the fractional part of a time stamp $t_i$ as datum.
A timed word 
\[
w = \pair{a_1}{t_1} \pair{a_2}{t_2} \ldots \pair{a_n}{t_n}
\]
is a \emph{timed braid} 
if the very first time stamp equals zero, $t_1 = 0$, and moreover
\begin{itemize}
\item for all $i <  n$, if $t_i < \podloga{t_n}$ then $t_i + 1$ appears
among $t_{i+1}$~\ldots~$t_n$,
\item the alphabet can be partitioned into $\Alf = \Alf_1 \cup \Alf_2$ so that the marked positions 
(i.e., those labeled by $\Alf_2$) are precisely those carrying \emph{integer} time stamp.
\end{itemize}

Braids will play a central role in the following section.
In fact both  data braids and timed braids represent essentially 
the same concept, disregarding some minor details, as illustrated next.
\begin{example}
We show  a  data braid $w$ and a `corresponding' timed braid $v$.
$\Alf_1 = \{a,b \}$ and $\Alf_2 = \{\bar{a}, \bar{b} \}$. 
\begin{align*}
w \;= \quad &  \pair{\bar{b}}{2} \pair{a}{4} & &\cdot~~~  \pair{\bar{a}}{2} \pair{b}{4}
  \pair{b}{8} &  &\cdot~~~ \pair{\bar{b}}{2} \pair{b}{3} \pair{a}{4} \pair{a}{8}
  \pair{b}{9} \\
v\;= \quad &  \pair{\bar{b}}{0.0} \pair{a}{0.5} &  &\cdot~~~ \pair{\bar{a}}{1.0} \pair{b}{1.5}
  \pair{b}{1.6} &  &\cdot~~~ \pair{\bar{b}}{2.0} \pair{b}{2.3} \pair{a}{2.5} \pair{a}{2.6}
  \pair{b}{2.9} .
\end{align*}
The particular data values and time stamps are exemplary ones.
A canonical way of obtaining a timed braid from a  data  braid
(and vice versa), to be explained below, will be ambiguous up to time (data) isomorphism.
\end{example}

\paragraph{Transformations.}

We introduce  two simple encodings:
one maps a timed word into a  data braid,
and the other maps a data word into a timed braid. 

\begin{equation*}
\label{e:trans}
\xymatrix{
\text{timed words} \ar[rr] && \text{timed braids} \ar@{<->}[d] \\
\text{data words} \ar[rr] && \text{data braids} 
}
\end{equation*}

A timed word $w$ over an alphabet $\Alf$ induces 
a timed braid $\mtb{w}$ over the extended alphabet $\Alf \cup \{\tik\} \cup \bar\Alf \cup \sset{\bar\tik}$, where 
$\bar \Alf = \sset{\bar a \mid a \in \Alf}$, as follows. 
First,  if $t_1 \neq 0$, add the pair $\pair{\tik}{0}$ at the very first position.
Then add pairs $(\tik, t)$ at all time points $t$ that are missing
 according to the definition of timed braid.
 Finally change every symbol $a$ at each position carrying an integer time stamp
by its `marked' counterpart $\bar a \in \bar\Alf \cup \sset{\bar\tik}$.

A data word $w$ over $\Alf$ 
may be canonically extended to a  data braid $\modb{w}$ over the alphabet $\Alf \cup \sset{\tik} \cup \bar \Alf \cup \sset{\bar\tik}$ as follows. Consider the ordered partition $w=w_1 \cdot \ldots \cdot w_n$ and let $d_\text{min}$ be the smallest datum appearing in $w$. Firstly, for every factor $w_i$, add the pair $\pair{\tik}{d_\text{min}}$ at the very first position of $w_i$, unless $w_i$ already contains the datum $d_\text{min}$. Secondly, for each datum $d$ appearing in any $w_i$, add $\pair{\tik}{d}$ to each of the following factors $w_{i+1} \ldots w_n$ that  do not contain $d$. This insertion is done preserving the order of the factor. 
Finally, change every symbol $a$ at the first position of a factor by its `marked' counterpart $\bar a \in \bar\Alf \cup \sset{\bar\tik}$. Note that as a result we obtain a data braid. 

\begin{example}
As an illustration, consider the effect of the above transformations on an exemplary 
data word $w$ and a timed word $v$.
\begin{align*}
w &= \pair{a}{4}~~\cdot~~\pair{b}{1} \pair{a}{4} 
  \pair{b}{8}~~\cdot~~\pair{a}{1}  \pair{a}{5} \pair{a}{8} \\
\modb{w} &= \pair{\bar\tik}{1}\pair{a}{4}~~\cdot~~\pair{\bar b}{1} \pair{a}{4} 
  \pair{b}{8}~~\cdot~~\pair{\bar a}{1} \pair{\tik}{4}  \pair{a}{5}\pair{a}{8}\\
v &= \pair{a}{0.0}\pair{a}{0.7}~~\cdot~~\pair{b}{1.5} ~~\cdot~~\pair{b}{2.0}\\
\mtb{v} &= \pair{\bar a}{0.0}\pair{a}{0.7}~~\cdot~~\pair{\bar\tik}{1.0}\pair{b}{1.5}\pair{\tik}{1.7} ~~\cdot~~\pair{\bar b}{2.0}\pair{\tik}{2.5}\pair{\tik}{2.7}
\end{align*}
\end{example}

We have thus explained the horizontal arrows of the  diagram, and
now we move to the vertical ones.
Both  mappings  preserve the length of the word.

A  timed braid $\pair{a_1}{t_1} \ldots \pair{a_n}{t_n}$
gives naturally rise to a data braid by replacing
each time stamp $t_i$ by its fractional part $\ulamek{t_i}$, and then 
mapping the set $\{\ulamek{t_1}, \ldots, \ulamek{t_n}\}$ 
into the data domain $\Data$ through an order-preserving injection.
We only want to consider order-preserving injections, thus
this always yields a  data braid.
Note that the choice of a particular order-preserving injection is irrelevant, as 
one always obtains the same data word up to data
isomorphism (cf.~Proposition~\ref{p:dperm}).
We hope this ambiguity will not be confusing.

A data braid $w=(a_1,d_1) \dotsb (a_n,d_n)$ may be turned into a timed braid through any order-preserving injection 
$f : \{d_1, \ldots, d_n\} \to [0,1)$ such that $f(d_1)=0$. 
Each element $(a_i,d_i)$ is mapped into a similar element $(a_i,k+f(d_i))$, where $k$ is the number of 
factors (in the ordered partition of $w$) that end
strictly before position $i$. Consecutive factors will get  consecutive natural numbers as the integer part of time stamps.
As before, we consider the choice of a particular injection $f$ irrelevant
(cf.~Proposition~\ref{p:tperm}).

Notice that
going from a  timed braid to a  data braid and back 
returns to the original word up to time isomorphism; likewise, combining the
transformations in the reverse order we get back to the same word,
up to data isomorphism.

Slightly overloading the notation, we write $\modb{w}$ to denote the data braid 
obtained from a \emph{timed} word $w$ by the appropriate composition of transformations just described.
Similarly, we write $\mtb{w}$ to denote the timed braid obtained from a \emph{data} word $w$.


\section{From timed automata to register automata}

We are going to show that, up to a suitable encoding,
languages recognized by timed automata are recognized by register automata as well. The transformation  keeps the number of registers equal to the number of
clocks, and preserve the mode of computation (nondeterministic, alternating).

\begin{theorem}
\label{t:time2reg}
Given an alternating timed automaton $\aut$ one can compute in exponential time 
an  order-blind register automaton $\baut$ such that for any timed word $w$,
$\aut$ accepts $w$ if an only if $\baut$ accepts $\modb{w}$.
The number of registers of $\baut$ equals the number of clocks of $\aut$.
Moreover, $\baut$ is deterministic (resp. nondeterministic, alternating) if $\aut$ is so.
\end{theorem}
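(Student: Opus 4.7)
The plan is to simulate $\aut$ with $\baut$ by using one register per clock of $\aut$ to store the data value read at the last reset of that clock, and by maintaining in the finite control of $\baut$ a bounded counter and a single bit per clock that together determine the integer part and the fractional-zero status of each clock value. The braid structure of $\modb{w}$ is precisely what makes this feasible using only equality tests between the current data and the registers, which is what order-blindness demands.

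First I would define the state space of $\baut$ as tuples $(q, D_1, \ldots, D_k, b_1, \ldots, b_k)$, with $q$ a state of $\aut$, each $D_j \in \sset{0,\ldots,K{+}2}$ a capped factor counter for clock $c_j$ (where $K$ is the largest integer constant occurring in $\aut$), and each $b_j \in \sset{0,1}$. Registers of $\baut$ are in bijection with clocks of $\aut$. The intended invariant, maintained at every node of $\baut$'s acceptance game on $\modb{w}$, is: register $r_j$ stores the data value at the position where $c_j$ was last reset (or the initial datum, if never reset); $D_j$ counts the factor boundaries crossed strictly between that reset position and the current position, capped at $K{+}2$; and $b_j = 1$ precisely when the data value $d_{r_j}$ has already been encountered at some position of the current factor.

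The main obstacle is to argue that under this invariant $\baut$ can evaluate every clock constraint of $\aut$ using only $=_r$/$\neq_r$ tests. At position $i$, the integer part of $c_j$ equals $D_j$ when $d_i \geq d_{r_j}$ and $D_j - 1$ when $d_i < d_{r_j}$, while $c_j$ has fractional part zero iff $d_i = d_{r_j}$. The second condition is a direct equality test. For the first, I appeal to the braid property: within a factor the data projection is strictly increasing, and whenever $c_j$ was reset in an earlier factor the reset value $d_{r_j}$ is guaranteed to reappear in every subsequent factor. Consequently $d_i < d_{r_j}$ holds iff $b_j = 0$ and $d_i \neq d_{r_j}$; $d_i > d_{r_j}$ iff $b_j = 1$ and $d_i \neq d_{r_j}$; and $d_i = d_{r_j}$ is decided directly. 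Hence every constraint $\sigma \in \constr$ translates to an equivalent boolean combination of order-blind tests over $\Data^\reg \times \Data$ parametrized by the extended state.

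With constraint evaluation in hand, I would translate each transition $\tr{q}{a}{\sigma}{\phi}$ of $\aut$ into transitions of $\baut$, indexed by the extended state and by the maximal conjunctions of $=_r$/$\neq_r$ tests that, together with the state, select a unique $\sigma$. Padding symbols $\tik$ and $\bar\tik$ induce bookkeeping-only transitions: a $\tik$ updates each $b_j$ to $b_j \vee [d_i =_{r_j}]$ and leaves everything else untouched, while a $\bar\tik$ first cap-increments each $D_j$ and resets each $b_j$ to $0$ before applying the same bit update; real symbols $a$ and $\bar a$ additionally invoke the translation of $\phi$, where each atom $(q',X)$ of $\phi$ is mapped to $(q'',X)$ with $q''$ obtained from the current extended state by setting $D_j := 0$ and $b_j := 1$ for $j \in X$. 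Because the boolean skeleton of $\phi$ is copied verbatim, determinism, nondeterminism and alternation transfer directly from $\aut$ to $\baut$, and the (Partition) condition holds because, once the extended state is fixed, the maximal equality-test conjunctions partition $\Data^\reg \times \Data$. Finally, the extended state space has size $O(|Q_\aut|\cdot (K{+}3)^k \cdot 2^k)$, exponential in the binary encoding of $\aut$, so $\baut$ is constructible in exponential time with the same number of registers as clocks.
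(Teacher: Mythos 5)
Your construction is correct and follows essentially the same route as the paper's proof: one register per clock holding the datum of the last reset, an exponential finite control recording a capped integer part for each clock, transitions indexed by the pattern of equality/inequality tests against the registers, and a verbatim copy of the boolean skeleton of each transition formula (which is what preserves determinism/nondeterminism/alternation). The only, immaterial, difference is in the bookkeeping: the paper reads off the integer part by directly counting occurrences of the stored datum (incrementing the capped counter on each successful $=_c$ test), whereas you count marked factor boundaries and keep an extra bit per clock recording whether the stored datum has been passed in the current factor --- both extract exactly the same information from the braid structure.
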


\begin{proof}
We describe the construction of a register automaton $\baut$ that faithfully simulates  a given timed automaton $\aut$.
The idea is that the behavior of each clock can be simulated by a register. When the clock is reset on one automaton, the other loads the current data value $d$ into the register. Then, by the data braid structure, the register automaton knows exactly how many units of time have elapsed for the clock by simply counting the number of times that $d$ has appeared.


Consider the maximum constant $k_\text{max}$ that appears in the transition rules of $\aut$. Let $Q$ and $\clo$ denote the states and clocks of $\aut$, respectively.
The states of $\baut$ will be $Q \times \{0, 1, \ldots, k_\text{max} \}^\clo$.
Intuitively, for each clock $c$ the automaton $\baut$ stores the information about 
the integer part $\podloga{c}$ of current value of $c$, up to $k_\text{max}$. In other words, $\baut$ keeps the count of how many times (up to $k_\text{max}$)  the value stored in $c$ appeared in the word since it was stored.
The initial state is $(q_0, v)$ where $v$ assigns $0$ to each $c \in \clo$.
Recall that it is assumed that as the very first step the automaton loads 
the current data into all registers.

There will be as many registers in $\baut$ as clocks in $\aut$, $\reg = \clo$, and each register
$c$ will be used to update the information about the integer part of the value of $c$.
Whenever the clock $c$ is \emph{reset} in $\aut$, the corresponding action of $\baut$ is to \emph{store} the current value 
in the register $c$ and
to change state from $(q, v)$ to $(q, v_{c})$, where $v_{c}$ 
differs from $v$ by assigning $v_{c}(c) = 0$.
The automaton $\baut$ will also be capable to detect that the integer part of $c$ increases. 
Whenever the equality test 
$=_c$ succeeds (recall that $\baut$ is supposed to run over
a data braid) and $v(c) < k_\text{max}$, the state is changed
from $(q, v)$ to $(q, v^c)$,
where $v^c$ differs from $v$ by assigning $v^c(c) = v(c) + 1$.
On the other hand, $v$ is not changed when $v(c) = k_\text{max}$.

Now we describe the transitions of $\baut$ in more detail.
The automaton does  not distinguish marked symbols from unmarked ones.
If the current letter is $\tik$ or $\bar{\tik}$, then $\baut$ only needs to update the $v$ part of its state $(q,v)$.
Note that in this model  many registers may store the same data value, and then there will be a transition
in $\baut$ for each vector $v \in  \{0, 1, \ldots, k_\text{max} \}^\clo$ and subset $X \subseteq \clo$:
\begin{equation}
\label{e:atranmarked}
\begin{array}{c}
\tr{(q,v)}{a}{\big(\bigwedge_{c \in X} =_c \ \land \  \bigwedge_{c \notin X} \neq_c \big)}{((q, v^X), X)} \qquad\text{for $a \in \sset{\tik,\bar{\tik}}$.}
\end{array}
\end{equation}
where $v^X$ is defined by:
\[
v^X(c) = \begin{cases}
v(c) + 1 & \text{ if } c \in X \text{ and } v(c) < k_\text{max} \\
v(c) & \text{ otherwise.}
\end{cases}
\]

Otherwise, when the current letter $a$ is different from $\tik$, $\bar{\tik}$,
the automaton $\baut$ simultaneously updates $v$ similarly as above
and simulates an actual step of $\aut$.
Consider any transition 
\begin{equation}
\label{e:atran}
\tr{q}{a}{\sigma}{\phi}
\end{equation}
 of $\aut$.
We describe the corresponding transitions of $\baut$.
There are many of them,  each of them induced by 
$v$ and $X$ similarly as above. 
They are of  the following form:
\begin{equation}
\label{e:atrancorr}
\tr{(q,v)}{a}{\big(\bigwedge_{c \in X} =_c \ \land \  \bigwedge_{c \notin X} \neq_c \big)}{\phi^X_v}  \qquad\text{for $a\in\Alf \cup \bar\Alf$,}
\end{equation}
where $\phi^X_v$ is appropriately obtained from $\phi$ to ensure that
the set of clocks reset by $\aut$ is the same as the set of registers
to which $\baut$ loads the current value, and that the new
vector $v$ keeps up-to-date information about the integer parts of clock values. Let us describe  how to build $\phi^X_v$ more precisely.

Consider any fixed vector $v \in \{0, 1, \ldots, k_\text{max} \}^\clo$ and $X \subseteq \clo$. 
Being in state $(q, v)$ and reading a next letter $a$,
the automaton assumes that the clock $c$ has a value in  $(v(c), v(c)+1]$ if $v(c)<k_\text{max}$, or in $(k_\text{max},\infty)$ if $v(c)=k_\text{max}$.
Further, if a clock $c$ verifies $c \in X$, it means that the value of the corresponding register $c$ equals to the current datum. This translates in an \emph{integer} number of units of time elapsed for the clock $c$, and the exact number (up to $k_\text{max}$) is given by $v(c)+1$.
The pair $[v, X]$ induces a subset of $(\rplus)^\clo$  (keep in mind that the test in~\eqref{e:atrancorr} holds)
containing all vectors $z$ such that for each clock $c$,
\begin{align*}
  z(c) = v(c) + 1 & \text{ ~~if{f}~~ } c \in X \text{ and } v(c) < k_\text{max}, \\
  v(c) < z(c) < v(c)+1 & \text{ ~~if{f}~~ } c \notin X \text{ and  } v(c) < k_\text{max},  \\
  z(c) > k_\text{max} & \text{ ~~if{f}~~ } v(c) = k_\text{max} .
\end{align*}
Recall that $\deno{\sigma}$ is also a subset of  $(\rplus)^\clo$.
If $[v, X] \cap\deno{\sigma} \not= \emptyset$,
the transition~\eqref{e:atrancorr} is added to $\baut$.
The action $\phi^X_v$ of $\baut$ is derived from $\phi$ by replacing each pair
$(p, Y) \in Q \times \powset{\clo}$ appearing in $\phi$ with $((p, v_Y), Y)$,
where $v_Y$ is obtained from $v^X$ by setting $v_Y(c) = 0$ for
all $c \in Y$.
Note that the structure of logical connectives in $\phi^X_v$ is the same as in $\phi$. 

A careful examination of the above construction reveals that
the initial configuration of the automaton $\baut$ should be treated differently,
as no modification of $v$ should be done in this case.
We omit the details.

The automaton $\baut$ is order-blind as required.
It is deterministic (resp. nondeterministic, alternating) whenever the automaton $\aut$ is so.
The size of $\baut$ may be exponential with respect to the size of $\aut$,
as the number of different sets $X$ considered in~\eqref{e:atranmarked} and~\eqref{e:atrancorr} is exponential.
\end{proof}

\medskip

\begin{example}
As an illustration, 
consider the nondeterministic one clock timed automaton 
that checks
that there are two time stamps whose difference is $1$ depicted in Figure~\ref{fig:time-aut-checks-timestamps-diff-1}.
Nondeterminism is represented by separate arrows in the automaton
instead of disjunctive formulae. 
For the sake of clarity, we omit some (non-accepting) transitions that would
have to be added in order to fulfill the (Partition) condition.
\begin{figure}[h!]
  \centering
  \begin{center}
    \begin{tikzpicture}[>=latex, shorten >=1pt, node distance=1in, on
      grid, auto]

      \node [state, initial] (q) at (0,0) {$q$}; \node [state,
      accepting] (q1) at ( 4,0) {$p$};

      \path [->] (q) edge [loop above, 
      ] node {
        \begin{minipage}{1cm}%
          \scriptsize{$a\\\ttt $} \\\scriptsize{$reset \:\{c\} $}
        \end{minipage}
      } (q)

      (q) edge [loop below] node {\begin{minipage}{1cm}%
          \scriptsize{$a\\\ttt $} \newline \scriptsize{$reset\:
            \emptyset$}
        \end{minipage}
      } (q)

      (q) edge 
      node {
        \begin{minipage}{1cm}%
          \scriptsize{$a\\ c=1$} \\\scriptsize{$reset\: \emptyset$}
        \end{minipage}
      } (q1)

      (q1) edge [loop above] node {
        \begin{minipage}{1cm}%
          \scriptsize{$a\\ \ttt$} \\\scriptsize{$reset\: \emptyset$}
        \end{minipage}
      } (q1) ;

    \end{tikzpicture}
  \end{center}
  
  \caption{An automaton checking that there are two timestamp whose difference is 1.}
\label{fig:time-aut-checks-timestamps-diff-1}
\end{figure}
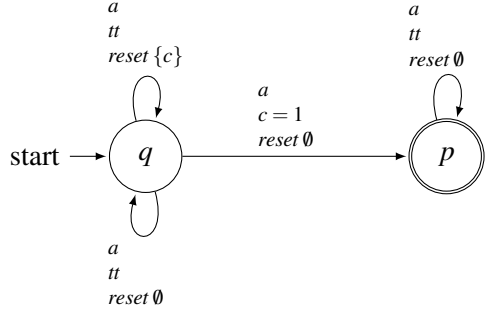
The construction described in the proof of Theorem~\ref{t:time2reg} yields the 
order-blind register automaton of Figure~\ref{fig:time2reg-automaton}.\\
\begin{figure}[h!]
  \centering
  \begin{center}
    \begin{tikzpicture}[>=latex, shorten >=1pt, node distance=1in, on
      grid, auto]
	\node [state, initial] (p0) at ( 0,0) {$q,0$};
	\node [state] (p1) at ( 4,0) {$q,1$};

	\node [state, accepting] (r0) at ( 8,0) {$p,0$}; 
	\node [state, accepting] (r1) at ( 12,0) {$p,1$};

      \path [->] (p0) edge [loop above] node {
        \begin{minipage}{1 cm}%
          \scriptsize{$\tik , \bar{\tik} , a , \bar{a} \\ \neq_c $}
          \\\scriptsize{$load \:\emptyset $}
        \end{minipage}
      } (p0)

      (p0) edge node [pos=0.75]{
        \begin{minipage}{1 cm}%
          \scriptsize{$\tik , \bar{\tik}, a , \bar{a}\\=_c $}
          \\\scriptsize{$load \:\emptyset $}
        \end{minipage}
      } (p1)

      (p1) edge [loop below] node {
        \begin{minipage}{1.5 cm}%
          \scriptsize{$\tik , \bar{\tik}, a , \bar{a}\\ \neq_c $}
          \\\scriptsize{$load \:\emptyset $}
        \end{minipage}
      } (p1)

      (p1) edge [loop right] node [anchor=north west] {
        \begin{minipage}{1.5 cm}%
          \scriptsize{$\tik , \bar{\tik}, a , \bar{a}\\ =_c $}
          \\\scriptsize{$load \:\emptyset $}
        \end{minipage}
      } (p1)



      [->] (r0) edge [loop below] node {
        \begin{minipage}{1.5 cm}%
          \scriptsize{$\tik , \bar{\tik}, a , \bar{a}\\ \neq_c $}
          \\\scriptsize{$load \:\emptyset $}
        \end{minipage}
      } (r0)

      (r0) edge [swap] node {
        \begin{minipage}{1.5 cm}%
          \scriptsize{$\tik , \bar{\tik}, a , \bar{a}\\ =_c $}
          \\\scriptsize{$load \:\emptyset $}
        \end{minipage}
      } (r1)

      (r1) edge [loop below] node {
        \begin{minipage}{1 cm}%
          \scriptsize{$\tik , \bar{\tik}, a , \bar{a}\\ \neq_c $}
          \\\scriptsize{$load \:\emptyset $}
        \end{minipage}
      } (r1)

      (r1) edge [loop above] node {
        \begin{minipage}{1 cm}%
          \scriptsize{$\tik , \bar{\tik}, a , \bar{a}\\=_c$}
          \\\scriptsize{$load \:\emptyset $}
        \end{minipage}
      } (r1)



      (p0) edge [loop below
      ] node [near end]{
        \begin{minipage}{0.7 cm}%
          \scriptsize{$ a , \bar{a}\\\ttt $} \\\scriptsize{$load
            \:\{c\} $}
        \end{minipage}
      } (p0)

      (p1) edge [bend left, 
      ] node {
        \begin{minipage}{1 cm}%
          \scriptsize{$ a , \bar{a}\\\ttt $} \\\scriptsize{$load
            \:\{c\} $}
        \end{minipage}
      } (p0)


      (p0) edge [bend left=30, 
      ] node [pos=0.5,  swap]{
        \begin{minipage}{0.5 cm}%
          \scriptsize{$ a , \bar{a}\\=_c $} \\\scriptsize{$load \:\{c\}
            $}
        \end{minipage}
      } (r1) ;

    \end{tikzpicture}
  \end{center}

\caption{The automaton resulting from the construction of Theorem~\ref{t:time2reg}.}
\label{fig:time2reg-automaton}
\end{figure}
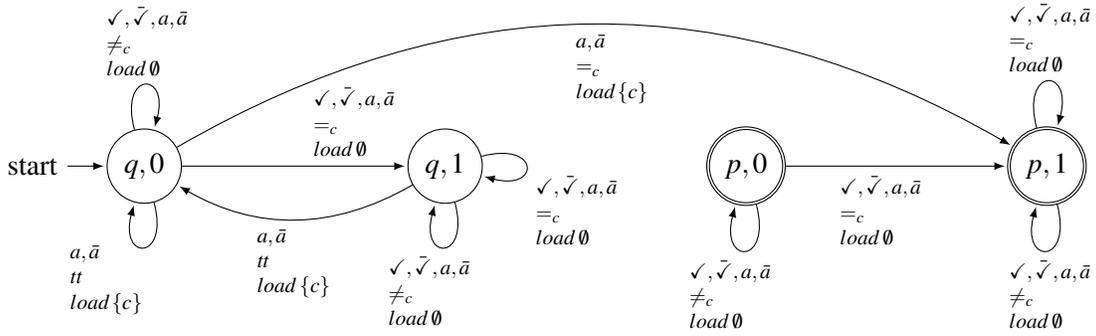
\end{example}

For the succesive results, we make use of the following lemma.

\begin{lemma}
\label{l:modb}
The complement of 
the language of data braids 
is recognized by a nondeterministic one register automaton.
\end{lemma}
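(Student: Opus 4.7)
The plan is to characterise when a word fails to be a data braid and then build a nondeterministic one register automaton $\baut$ that nondeterministically guesses which clause of the definition is violated. Denote by (A), (B), (C) the three requirements in the definition of data braid: (A) the minimum data value occurs at position $1$; (B) for every $i$, the data projection of $w_i$ is a substring of that of $w_{i+1}$; (C) a position $i$ is marked (labelled in $\Alf_2$) iff $d_i=d_1$. A word is not a braid iff at least one of (A), (B), (C) fails, so it suffices to build one nondeterministic branch per failure mode and take their union.

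By convention the single register is loaded with $d_1$ at the initial step. Detecting a failure of (A) is then immediate: nondeterministically pick a position $j>1$ and accept if $\doless r$ holds there. A failure of (C) splits into two symmetric cases: guess a marked position where $\neq r$ holds, or guess an unmarked position where $=r$ holds. Each of these branches uses only the register still holding $d_1$ and a constant-size control.

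The substance of the argument lies in detecting a failure of (B). I would first prove the combinatorial reformulation: (B) fails iff either (B1) some data value $d$ has its last occurrence in a non-final factor, or (B2) there are two consecutive occurrences of a data value $d$, at positions $j<l$, with at least two factor boundaries strictly between them. The forward direction is an easy case analysis on whether a witness value $d\in w_i\setminus w_{i+1}$ reappears later in $w$; the converse is immediate. This reformulation is crucial because it is verifiable with a single register.

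The main obstacle is locating factor boundaries without a second register to remember the previous datum. I resolve this by leveraging that, in every word satisfying (A) and (C), the starts of the factors of the ordered partition coincide exactly with the marked positions, so the marking itself is a syntactic marker of factor boundaries. The (B1) branch then guesses $j$, loads $d_j$, consumes the remainder checking $\neq r$ throughout, and accepts once it has seen at least one subsequent marked position. The (B2) branch guesses $j$, loads $d_j$, consumes positions checking $\neq r$ while counting up to two marked positions in the control state, and accepts upon meeting a later position satisfying $=r$. Correctness on valid braids follows because (A), (B), (C) together force every value $d>d_1$ to recur in every subsequent factor and $d_1$ to recur at every marked position only, so neither (B1) nor (B2) can trigger; on non-braids at least one branch succeeds, and the disjunction of all five branches yields the desired nondeterministic one register automaton.
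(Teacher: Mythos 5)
Your overall architecture (a nondeterministic union of one branch per failure mode, each using the single register) and your combinatorial reformulation of the substring condition into (B1)/(B2) are both sound and match the paper's strategy. The gap is in the sentence ``in every word satisfying (A) and (C), the starts of the factors of the ordered partition coincide exactly with the marked positions.'' This is false. Conditions (A) and (C) only give you one direction (every marked position has datum $d_1$, the minimum, hence starts a factor); they do \emph{not} force every factor start to be marked, because a factor can start at a position $i+1$ with $d_1 \doless d_{i+1} \doeq d_i$, which by (C) must be \emph{unmarked}. Concretely, take $w = \pair{\bar a}{1}\pair{b}{5}\pair{c}{3}\pair{d}{7}$ with only the first position marked: (A) and (C) hold, the ordered partition is $\pair{\bar a}{1}\pair{b}{5}\cdot\pair{c}{3}\pair{d}{7}$, and (B) fails since $\{1,5\}\not\subseteq\{3,7\}$. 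Yet no value recurs and there is no marked position after position $1$, so neither your (B1) branch (which waits for a \emph{marked} position after the last occurrence of $d_j$) nor your (B2) branch can fire, and your (A)/(C) branches are silent too. Your automaton therefore rejects a word in the complement of the braid language: completeness fails.

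The paper sidesteps exactly this by not checking clause (C) literally against $d_1$. Instead its first two branches verify ``marked if{f} factor start'' by a \emph{local} comparison of adjacent data: guess $i$, load $d_i$, and accept if position $i+1$ is marked with $d_i \doless d_{i+1}$, or unmarked with $d_i \odeq d_{i+1}$ (the latter catches the counterexample above at $i+1=3$). Only after these branches guarantee that marks and factor boundaries agree can the remaining branch safely use marked positions as boundary indicators for the substring condition, as you do in (B1)/(B2). Your proof is repaired by adding the ``unmarked position $i+1$ with $d_i \odeq d_{i+1}$'' branch (the symmetric marked case is already subsumed by your (A) and (C) branches); without it, the construction is incorrect.
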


\begin{proof}
A data word $w=(a_1,d_1) \dotsb (a_n,d_n)$ fails to be a data braid if{f} either
\begin{itemize}
\item there is some marked (i.e., carrying an alphabet letter from
$\bar{\Alf} \cup \bar{\tik}$) position $i+1$ such that $d_i \doless d_{i+1}$,
\item there is some unmarked position $i+1$ such that $d_i \odeq d_{i+1}$,
\item some datum strictly smaller than $d_1$ appears in $w$, or
\item for some position $i$, there are two marked positions $j < k$, both greater than $i$,
such that $d_i$ does not appear among $\{d_j \ldots d_k\}$;
or if $d_i$ does not appear after the last marked position.
\end{itemize}
A nondeterministic automaton can easily guess which of these conditions fails and verify it using one register.
\end{proof}

As a consequence of Lemma~\ref{l:modb},
the language  of data braids is recognized by an alternating one register automaton. This is due to the fact that this model is closed under complementation.

%
%
We want to use Theorem~\ref{t:time2reg} together with Lemma~\ref{l:modb} to show Theorem~\ref{t:timered2reg}
below.
However, there is a subtle point here: by Lemma~\ref{l:modb} register automata can recognize
the complement of  data braids, while we would need register automata to recognize
the complement of the \emph{image} of $\modb{\_}$ (a different language, since $\modb{\_}$ is not surjective). Unfortunately, the model cannot recognize such a language. In the proof below we deal with this problem by observing that $\modb{\_}$ is \emph{essentially} surjective onto  data braids.
\begin{theorem}
\label{t:timered2reg}
The following decision problems for timed automata:
%
language inclusion,  language equality,  nonemptiness and universality,
%
reduce to the analogous problems for register automata.
The reductions keep the number of registers equal to the number of clocks,
and preserve the mode of computation (nondeterministic, alternating)
of the input automaton.
\end{theorem}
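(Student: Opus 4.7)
The plan is to combine Theorem~\ref{t:time2reg}---which converts a timed automaton $\aut$ into a register automaton $\baut$ of the same mode and register count, satisfying $\aut$ accepts $w$ iff $\baut$ accepts $\modb{w}$---with Lemma~\ref{l:modb}, which yields a nondeterministic one-register automaton $\+C$ for the complement of the data-braid language and, by closure under complement, an alternating one-register automaton $\+C^c$ for the braids themselves. The four decision problems will then be transported across $\modb$ and handled by taking unions or intersections with $\+C$ or $\+C^c$.

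The first step is to make rigorous the essentially-surjective claim stated in the paragraph preceding the theorem: every data braid $v$ over the extended alphabet is data-isomorphic to $\modb{w}$ for some timed word $w$. This is proved by inverting the constructions of Section~3---the ordered partition of $v$ prescribes integer parts of timestamps and any order-preserving injection of its data values into $[0,1)$ fixes their fractional parts. Together with Proposition~\ref{p:dperm}, it gives
\[
\lang{\aut} \neq \emptyset \iff \lang{\baut} \cap \mathrm{Braids} \neq \emptyset,
\quad
\lang{\aut_1} \subseteq \lang{\aut_2} \iff \lang{\baut_1} \cap \mathrm{Braids} \subseteq \lang{\baut_2} \cap \mathrm{Braids}.
\]
From these equivalences the four reductions follow: universality of $\aut$ becomes universality of $\baut \cup \+C$; inclusion becomes $\lang{\baut_1} \subseteq \lang{\baut_2 \cup \+C}$; equality becomes equality of $\baut_1 \cup \+C$ and $\baut_2 \cup \+C$ (both languages agree outside $\mathrm{Braids}$); and nonemptiness becomes nonemptiness of $\baut \cap \+C^c$. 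Each union with $\+C$ uses $\max(k,1) = k$ registers when $k \geq 1$ and preserves nondeterminism or alternation, since $\+C$ has a single register and a disjunctive (nondeterministic) transition structure.

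The main obstacle will be the nonemptiness reduction in the nondeterministic case, where intersecting $\baut$ with the alternating braid recognizer $\+C^c$ would naively yield an alternating automaton and thus break the mode-preservation claim. I expect to resolve this by inlining the braid check into a refinement of the construction of Theorem~\ref{t:time2reg}: the local braid conditions (marked positions carry the minimum datum $d_1$; unmarked positions are strictly increasing within a factor) can be encoded into the control states of $\baut$ together with a small amount of bookkeeping, while the remaining global substring condition can be checked nondeterministically by reusing one of the $k \geq 1$ already-present registers. Verifying that the so-modified $\baut$ still faithfully simulates $\aut$ and accepts only data braids---so that its plain nonemptiness coincides with $\lang{\baut} \cap \mathrm{Braids} \neq \emptyset$ while keeping both the register count and the mode---is the delicate final step of the argument.
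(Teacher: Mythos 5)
Your overall architecture---compose Theorem~\ref{t:time2reg} with Lemma~\ref{l:modb} and transport the four problems across $\modb{\_}$ by adjoining the complement-of-braids language---is exactly the paper's, but your designated ``first step'' contains a genuine gap: the map $\modb{\_}$ is \emph{not} surjective onto data braids, not even up to data isomorphism. A concrete counterexample: take any $\modb{v}$ and append $\pair{\tik}{d}$ for a fresh datum $d$ larger than everything in the word. The result is still a data braid (it merely lengthens the last factor, which cannot break the substring condition), yet it is isomorphic to no $\modb{v'}$, since in every word in the image of $\modb{\_}$ each datum occurs at least once on a non-$\tik$ position; trailing factors labeled exclusively by $\tik,\bar\tik$ give another family of braids outside the image. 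So your proposed proof of surjectivity by ``inverting the constructions of Section~3'' fails, and with it the backward directions of your two displayed equivalences as you derive them. The equivalences are salvageable, but only through the weaker statement the paper actually proves (Claim~\ref{claim:surjective-databraid}): for every data braid $w$ there is a timed word $v$ such that $\aut'$ and $\baut'$ \emph{cannot distinguish} $w$ from $\modb{v}$. Establishing this requires (a) identifying the \emph{useless} positions of a braid (the $\tik$-positions whose datum has so far occurred only on $\tik$-positions, and positions in trailing all-$\tik$/$\bar\tik$ factors), (b) showing that deleting them yields a word in the image of $\modb{\_}$ up to isomorphism, and (c) inspecting the construction of Theorem~\ref{t:time2reg} to verify that the simulating automata are insensitive to such deletions. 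None of this appears in your proposal, and it is the real content of the proof.

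A secondary point: you correctly sense trouble in the nonemptiness reduction for nondeterministic automata, since the braid language itself (unlike its complement) involves a universally quantified substring condition; but your proposed fix---checking that global condition ``nondeterministically by reusing a register''---cannot work, because a nondeterministic automaton cannot certify a for-all property by guessing a single witness. The paper's formulation keeps the braid check on the permissive side of each reduction (as the disjunct $\lang{\aut_{\neg\text{db}}}$ in an inclusion, or as a conjunct only where alternation is already available), which is where your union-versus-intersection bookkeeping needs to be reworked.
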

\begin{proof}
Consider the inclusion problem only, the other reductions are obtained in the same way.
Given two timed automata $\aut$ and $\baut$, nondeterministic or alternating, 
we apply Theorem~\ref{t:time2reg} to obtain two corresponding register automata $\aut'$ and $\baut'$.
We claim that, for $\aut_{\neg\text{db}}$ given by Lemma~\ref{l:modb}, it holds:
\[\lang{\aut} \subseteq \lang{\baut} \text{~~~if and only if~~~} \lang{\aut'} \subseteq \lang{\baut'} \cup \lang{\aut_{\neg\text{db}}}.\]

\paragraph{(if)}
This implication is easy.
Assume $\lang{\aut'} \subseteq \lang{\baut'} \cup \lang{\aut_{\neg\text{db}}}$ and let 
$w \in \lang{\aut}$.
By Theorem~\ref{t:time2reg} $\modb{w} \in \lang{\aut'}$ and hence
$\modb{w} \in \lang{\baut'}$.
Again by Theorem~\ref{t:time2reg} $w \in \lang{\baut}$ as required.

\paragraph{(only if)}
Assume $\lang{\aut} \subseteq \lang{\baut}$ and let $w \in \lang{\aut'}$.
If $w$ is not a  data braid then $w \in \lang{\baut'} \cup \lang{\aut_{\neg\text{db}}}$ as required.
Otherwise, $w$ is a data braid, and we have the following:
\begin{claim}\label{claim:surjective-databraid}
There is a timed word $v$ such that the automata $\aut'$ and $\baut'$ cannot distinguish
between $w$ and $\modb{v}$, i.e., accept either both or none of them.
\end{claim}
With the Claim above, by Theorem~\ref{t:time2reg} we immediately obtain $v \in \lang{\aut} \subseteq \lang{\baut}$.
Again by Theorem~\ref{t:time2reg} we get $\modb{v} \in \lang{\baut'}$, thus
$w \in \lang{\baut'}$ as well due to the Claim.

\vspace{.5em}
\noindent
\emph{Proof of Claim~\ref{claim:surjective-databraid}.}  
If the mapping $\modb{\_}$ was surjective onto data braids (up to isomorphism), then
$w$ would be equal to $\modb{v}$ (up to isomorphism) for some $v$.
This is however not the case!  For example, consider appending $(\tik,d)$ for a sufficiently big $d$ at the end of $\modb{v}$. We then obtain a data braid which is not equal to $\modb{v}$ for any $v$. But notice that in fact this last position is useless for $\aut'$ and $\baut'$.

A position $i$ in a  data braid $w$ is considered \emph{useless} if{f} (i) it is labeled by $\pair{\tik}{d}$, for some datum $d$, and
  all appearances of the datum $d$ before $i$ are labeled with $\tik$; or (ii) all the positions in its factor and in all following  factors are labeled exclusively with $\tik$ or $\bar{\tik}$.
Let $\widetilde{w}$ denote the result of removing all useless positions from $w$. We then have the following.
\begin{center}
$\widetilde{w}$ equals to $\modb{v}$, up to isomorphism, for some timed word $v$.
\end{center}
Consider any order preserving injection $f$ from data values appearing in $\widetilde{w}$ to $[0,1)$, and let $v$ be the result of the following steps:
(1) i.e., replace every $(a_i,d_i)$ of $\widetilde{w}$ with 
$(a_i, k + f(d_i))$, where $k$ is the number of factors in $\widetilde{w}$ that end before position $i$;
(2) remove all $\tik$/$\bar\tik$ positions; and (3) project the alphabet into $\Alf$.

By definition of $\modb{~}$, $\modb{v}$ is, up to isomorphism, equal to $\widetilde{w}$. Thus, $\aut'$ and $\baut'$ do not distinguish between $\widetilde{w}$ and $\modb{v}$.
It only remains to show that $\aut'$ and $\baut'$ do not distinguish between
$w$ and $\widetilde{w}$.
\begin{center}
Each of $\aut'$, $\baut'$  either accepts both $w$ and $\widetilde{w}$, or none of them.
\end{center}
This is true by construction, since when the input letter is $\tik$, the register automaton $\aut'$ (or $\baut'$) 
only updates the information about the integer part of those clocks $c$ for which the
equality test $=_c$ holds. When reading a useless position, if it is useless because of (i), then the equality  holds for no clock; whereas in case (ii) the update will be inessential for the acceptance.
This completes the proof of the claim as well as the proof of the theorem.
\end{proof}


\section{From register automata to timed automata}

In this section we complete the relation between the models of automata. We show that, up to a suitable encoding, languages of register automata may be
recognized by timed automata. Again, this transformation keeps the number of registers equal to the number of clocks, and preserves the mode of computation (nondeterministic, alternating).
Thus we obtain a tight relationship between the two classes of automata.

\begin{theorem}
\label{t:reg2time}
Given an alternating register automaton $\aut$ one can compute in exponential time
a timed automaton $\baut$ such that for any data word $w$,
$\aut$ accepts $w$ if and only if $\baut$ accepts $\mtb{w}$.
The number of clocks of $\baut$ equals the number of registers of $\aut$.
Moreover, $\baut$ is deterministic (resp. nondeterministic, alternating) if $\aut$ is so.
\end{theorem}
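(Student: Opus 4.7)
The plan is to construct $\baut$ by simulating each register $r$ of $\aut$ with one clock $c_r$ of $\baut$, plus a small amount of extra state bookkeeping per register. The crucial observation, enabled by the substring property of timed braids, is the following clock invariant: once $r$ has been loaded with some value $d_r$, the substring condition forces $d_r$ to reappear exactly once in every subsequent factor of $\mtb{w}$, at times $t_r+1, t_r+2, \ldots$ Hence if I reset $c_r$ at every such \emph{sync} step (i.e.\ each position where $d_c = d_r$), then $c_r$ stays in $[0,1]$ throughout the lifetime of $r$, so $c_r = 1$ precisely identifies a sync step and $c_r < 1$ a non-sync step; no clock constant other than $1$ is ever needed.

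With this in mind, the states of $\baut$ are $Q \times \{\text{before}, \text{after}, \perp\}^\reg \times \{A, B, \perp\}^\reg$, where $b_r$ records whether the current position lies before $(d_c < d_r)$ or at/after $(d_c \geq d_r)$ the sync of $r$ in the current factor, and the tag $A/B$ records whether $r$ was loaded at a marked position ($d_r = d_{\text{min}}$, Case~A) or not (Case~B). The register tests then translate to: $=_r$ iff this step is a sync for $r$; $<_r$ iff $b_r = \text{before}$ and the step is not a sync; $>_r$ iff $b_r = \text{after}$ and the step is not a sync. For each $\aut$-transition $\tr{q}{a}{\sigma}{\phi}$ and each candidate sync set $X \subseteq \reg$, I put into $\baut$ a transition guarded by $\bigwedge_{r \in X}(c_r = 1) \land \bigwedge_{r \notin X}(c_r < 1)$, retained only when $\sigma$ agrees with $X$ and the current $b$-vector under the above correspondence. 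Its action resets the clocks in $X$, flips the corresponding $b_r$'s to \emph{after}, additionally flips $b_r$ to \emph{before} for all Case-B registers if $a$ is marked (crossing a factor boundary), and applies $\phi$ with every newly loaded register getting $c_r := 0$, $b_r := \text{after}$ and the appropriate $A/B$ tag. The inserted $\tik/\bar\tik$ positions trigger only these bookkeeping updates and no $\aut$-step.

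The main technical obstacle is the enumeration over all subsets $X \subseteq \reg$ needed to cover every possible simultaneous sync configuration at a single letter, especially at marked positions where all Case-A registers sync at once and all Case-B registers flip their $b_r$; this is the source of the stated exponential blowup, exactly as in Theorem~\ref{t:time2reg}. Correctness is by induction on play length. Forward: given an accepting $\aut$-play on $w$, Eve plays in $\baut$ on $\mtb{w}$ by taking $X$ at each step to be exactly the registers whose stored value equals the current datum; the clock invariant guarantees this choice is always enabled, and the $b_r$-vector faithfully mirrors the real data ordering. Backward: in any winning $\baut$-play on $\mtb{w}$ the clock guards $c_r = 1$ vs $c_r < 1$ force Eve's sync/non-sync guesses to match the actual timings of $\mtb{w}$, so her strategy projects to a winning $\aut$-play on $w$. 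The resulting $\baut$ uses exactly $|\reg|$ clocks, and the alternation mode is preserved since the boolean structure of $\phi$ is copied verbatim into $\baut$'s transitions.
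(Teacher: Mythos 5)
Your construction is essentially the paper's own proof: one clock per register, reset both at loads and whenever $c_r=1$ so that clocks stay in $[0,1]$ and $c_r=1$ detects the recurrence of the stored datum, together with a per-register state bit (your $b_r$, the paper's set $X\subseteq\reg$) recording whether the register's sync in the current factor has passed, and the same translation of order tests into clock constraints plus that bit. The extra $A/B$ tag is redundant (whether a register syncs at a marked position is already determined by the guard $c_r=1$) but harmless, so the proposal is correct and matches the paper's argument.
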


\begin{proof}
We describe the construction of a timed automaton $\baut$ that faithfully simulates
the behavior of a given register automaton $\aut$.
Let $\reg$ be the set of registers of $\aut$.
The number clocks in $\baut$ is the same as the number of registers in $\aut$, $\clo = \reg$.
A clock $r$ is reset whenever $\aut$ loads the current data value into register $r$.
Moreover, each clock is also reset whenever the constraint $r = 1$ is met.
Thus, when $\baut$ runs over a time braid,
no clock will ever have value greater than $1$.

The state space of $\baut$ is built on top of the states $Q$ of $\aut$.
Additionally, for each clock $r$ the automaton $\baut$ stores in its state one bit of 
information describing whether the last marked position was seen  before or after
the last reset of $r$.
This will allow $\baut$ to simulate tests comparing the current
data value with data values stored in registers.

Formally, states of $\baut$ are pairs $\pair{q}{X} \in Q \times \powset{\reg}$.
Initially the set $X$, which we call the register component of a state, is 
chosen as $X = \reg$ (according to the assumption that the register automaton loads the  first datum into all its registers as its very first action).
At each marked symbol $\bar{a}$ or $\bar{\tik}$, the automaton $\baut$ sets
$X := \emptyset$. 
Moreover, at each reset of $r$ (at marked or unmarked positions), $r$ is added to $X$.
As a consequence of this behavior, it invariantly holds: 
$r \in X$ if and only if the position of the last reset of $r$ is greater or equal to
the last marked position.
Hence, the test $\doeq r$ (current data smaller or equal to register $r$)
is satisfied at a state $\pair{q}{X}$ if and only if $r \notin X$.
The table below summarizes all the atomic tests and the corresponding constraints
on clock values and on the register component of state:

\begin{center}
\begin{tabular}{|c|l|l|}
\hline
\ test  in $\aut$ \ & \ meaning & \ constraint in $\baut$ \\
\hline
\ $\doeq r$ \ & \ current datum smaller or equal to $r$ \ & \ $r \notin X$ \\
$\doless r$ & \ current datum smaller than $r$ & \ $r \notin X$ and $r \neq 1$ \ \\
$\odeq r$ & \ current datum greater or equal to $r$ \ & \ $r \in X$ or $r = 1$ \\
$\odless r$ & \ current datum greater than $r$ \ & \ $r \in X$\\
\hline
\end{tabular}
\end{center}
We just described how the register component of a state is updated
and when the clocks are reset.
At input letter $\tik$ or $\bar{\tik}$ this is the only the automaton $\baut$
has to do.
Each transition $\tra{q}{a}{t}{b}$ of $\aut$ with $a \in \Alf$ gives raise to a number of transitions
$\tra{(q,X)}{a}{\sigma}{b'}$ and $\tra{(q,X)}{\bar{a}}{\sigma}{b'}$ of $\baut$
that additionally keep track of the change of state of $\aut$ and of
load operations performed by $\aut$, as described above.
The register test $t$ gives raise to a clock constraint $\sigma$ as described in
the table above.
The structure of logical connectives in $b'$ is the same as in $b$, hence
$\baut$ is deterministic (resp. nondeterministic, alternating) whenever $\aut$ is so. 
\end{proof}

\begin{example}
Consider the simple nondeterministic one register
automaton that checks if the first datum in a word is equal to the last one depicted in Figure~\ref{fig:reg-aut-checks-first-last-eq}.
Similarly as before, the (Partition) condition is not satisfied as
some (non-accepting) transitions are missing.
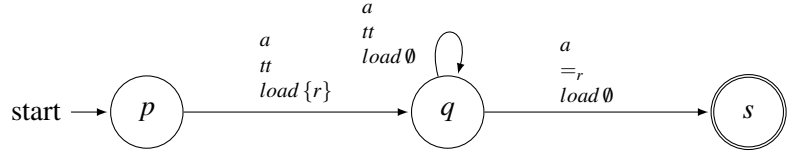
\begin{figure}[h!]
  \centering
  \begin{center}
    \begin{tikzpicture}[>=latex, shorten >=1pt, node distance=1in, on
      grid, auto]

      \node [state, initial] (q) at ( 3 * 0-2,0) {$p$}; \node [state]
      (q1) at ( 4-2,0) {$q$}; \node [state, accepting] (q2) at (
      8-2,0) {$s$};

      \path [->] (q) edge 
      node {
        \begin{minipage}{1cm}%
          \scriptsize{$a\\\ttt $} \\\scriptsize{$load \: \{r\}$}
        \end{minipage}
      } (q1)

      (q1) edge [loop above] node [ anchor=east]{\begin{minipage}{1cm}%
          \scriptsize{$a\\\ttt $} \newline
          \scriptsize{$load\:\emptyset$}
        \end{minipage}
      } (q1)

      (q1) edge 
      node {
        \begin{minipage}{1cm}%
          \scriptsize{$a\\ =_r$} \\\scriptsize{$load\: \emptyset$}
        \end{minipage}
      } (q2) ;

    \end{tikzpicture}
  \end{center}
  
  \caption{An automaton checking that the first datum is equal to the last one.}
\label{fig:reg-aut-checks-first-last-eq}
\end{figure}
The construction in the proof of Theorem~\ref{t:reg2time} yields the automaton of Figure~\ref{fig:reg2time-automaton}.
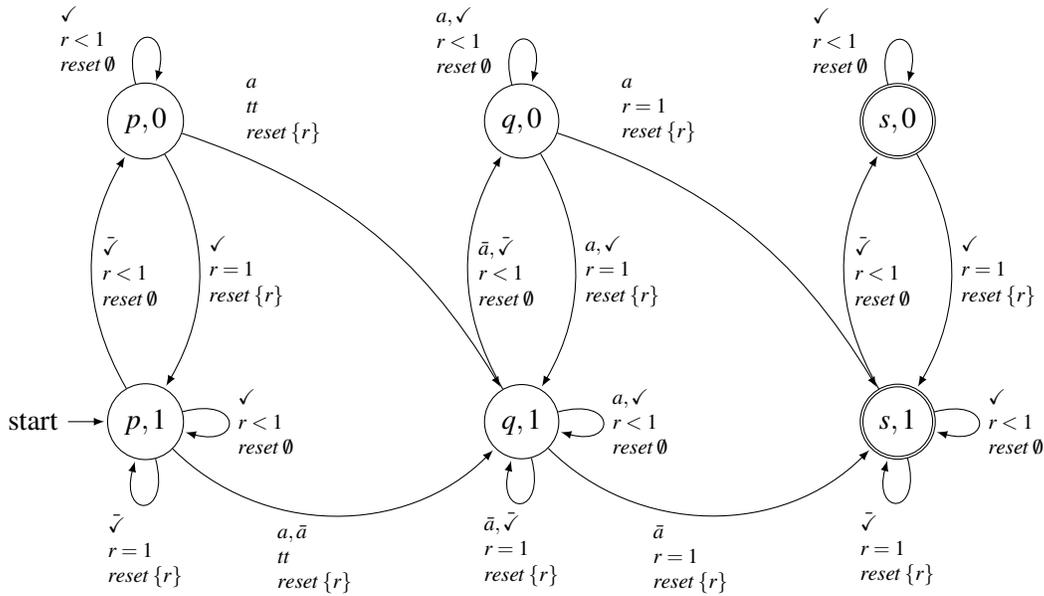
\begin{figure}[h!]
  \centering 
    \noindent
    \begin{tikzpicture}[>=latex, shorten >=1pt, node distance=1in, on
      grid, auto]

      \node [state] (q) at ( 0,0) {$p,0$}; 
      \node [state, initial] (q1) at (0, -4) {$p,1$};

      \node [state] (q0) at ( 5,0) {$q, 0$}; 
      \node [state] (q10) at (5,-4) {$q, 1$};

      \node [state, accepting ] (q0n) at ( 10,0) {$s, 0$}; 
      \node [state, accepting] (q10n) at ( 10,-4) {$s, 1$};

      \path [->] (q) edge [bend left] node [
]{\begin{minipage}{1cm}%
          \scriptsize{$\tik\\ r = 1$}\\
          \scriptsize{$reset\: \{r\}$}
        \end{minipage}
      } (q1)

      (q) edge [loop above] node [ anchor=east]{\begin{minipage}{1cm}%
          \scriptsize{$\tik\\ r < 1$} \newline \scriptsize{$reset\:
            \emptyset$}
        \end{minipage}
      } (q)
      
    (q1) edge [loop below] node {\begin{minipage}{1cm}%
          \scriptsize{$\bar{\tik}\\ r = 1$} \newline
          \scriptsize{$reset\: \{r\}$}
        \end{minipage}
      } (q1)

      (q1) edge [bend left] node [ swap
      ]{\begin{minipage}{1.5cm}%
          \scriptsize{$\bar{\tik}\\ r < 1$}\\
          \scriptsize{$reset\: \emptyset$}
        \end{minipage}
      } (q)

      (q1) edge [loop right] node {\begin{minipage}{1cm}%
          \scriptsize{$\tik\\ r < 1$} \newline \scriptsize{$reset\:
            \emptyset$}
        \end{minipage}
      } (q1)

      (q0) edge [ bend left] node {\begin{minipage}{1cm}%
          \scriptsize{$a , \tik\\ r = 1$}\\
          \scriptsize{$reset\: \{r\}$}
        \end{minipage}
      } (q10)

      (q0) edge [loop above] node [ anchor=east]{\begin{minipage}{1cm}%
          \scriptsize{$a , \tik\\ r < 1$} \newline
          \scriptsize{$reset\: \emptyset$}
        \end{minipage}
      } (q0) 

      (q10) edge [loop below] node {\begin{minipage}{1cm}%
          \scriptsize{$\bar{a} , \bar{\tik}\\ r= 1$} \newline
          \scriptsize{$reset\: \{r\}$}
        \end{minipage}
      } (q10) 

      (q10) edge [ bend left] node [
swap]{\begin{minipage}{1cm}%
          \scriptsize{$\bar{a} , \bar{\tik}\\ r<1$}\\
          \scriptsize{$reset\: \emptyset$}
        \end{minipage}
      } (q0)

      (q10) edge [loop right] node {\begin{minipage}{1cm}%
          \scriptsize{$a , \tik\\ r<1$} \newline \scriptsize{$reset\:
            \emptyset$}
        \end{minipage}
      } (q10)

      (q0n) edge [bend left] node {\begin{minipage}{1.5cm}%
          \scriptsize{$\tik\\ r= 1$}\\
          \scriptsize{$reset\: \{r\}$}
        \end{minipage}
      } (q10n)

      (q0n) edge [loop above] node [ anchor=east]{\begin{minipage}{1cm}%
          \scriptsize{$\tik\\ r<1$} \newline \scriptsize{$reset\:
            \emptyset$}
        \end{minipage}
      } (q0n)

      (q10n) edge [loop below] node {\begin{minipage}{1cm}%
          \scriptsize{$\bar{\tik}\\ r= 1$} \newline
          \scriptsize{$reset\: \{r\}$}
        \end{minipage}
      } (q10n)

      (q10n) edge [bend left] node [swap]{\begin{minipage}{1cm}%
          \scriptsize{$\bar{\tik}\\ r<1$}\\
          \scriptsize{$reset\: \emptyset$}
        \end{minipage}
      } (q0n)

      (q10n) edge [loop right] node {\begin{minipage}{1cm}%
          \scriptsize{$\tik\\ r<1$} \newline \scriptsize{$reset\:
            \emptyset$}
        \end{minipage}
      } (q10n)

      (q1) edge [
      , bend right=45, swap] node {
        \begin{minipage}{1.5cm}%
          \scriptsize{$a , \bar{a}\\ \ttt$} \\\scriptsize{$reset\:
            \{r\}$}
        \end{minipage}
      } (q10)

      (q) edge [bend left=20]
      node [very near start] {\begin{minipage}{1 cm}%
          \scriptsize{$a \\ \ttt$}\\ \scriptsize{$reset\: \{r\}$}
        \end{minipage}} (q10)

      (q0) edge [ bend left=20]
      node [very near start]{\begin{minipage}{1cm} \scriptsize{$a\\
            r=1$}\\ \scriptsize{$reset\: \{r\}$}\end{minipage} }(q10n)

      (q10) edge [
      bend right=45, swap] node {\begin{minipage}{1.5 cm}
          \scriptsize{$\bar{a}\\ r=1$}\\
          \scriptsize{$reset\: \{r\}$}
        \end{minipage}}(q10n)
                 
      ;

    \end{tikzpicture}
  \caption{The result of the construction of Theorem~\ref{t:reg2time}.}
\label{fig:reg2time-automaton}
\end{figure}
\end{example}

For the next results, we make use of the following Lemma.

\begin{lemma}
\label{l:mtb}
The complement of 
the language of  timed braids 
is recognized by a nondeterministic one clock automaton.
\end{lemma}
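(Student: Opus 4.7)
The plan is to follow the template of Lemma~\ref{l:modb} for data braids: enumerate the possible violations of the braid condition and show that each can be nondeterministically guessed and certified with a single clock. A timed word $w = (a_1, t_1) \ldots (a_n, t_n)$ fails to be a timed braid iff at least one of the following holds: (1) $t_1 \neq 0$, or the first symbol is unmarked; (2) some marked position sits at a non-integer time, or some unmarked position sits at an integer time; (3) there is a position $i < n$ with $t_i < \lfloor t_n \rfloor$ for which $t_i + 1$ does not occur among $t_{i+1}, \ldots, t_n$. The automaton will nondeterministically choose which clause to certify.

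For (1), a single test $c > 0$ at the first symbol catches $t_1 \neq 0$, and a purely symbolic test handles the unmarked-first-symbol case (which is always a violation, since in a braid $t_1 = 0$ is integer and hence the first position is marked). For (2), I would run a deterministic tracking branch that resets $c$ at every marked symbol and accepts if either $c \geq 1$ holds at an unmarked symbol (so an integer time carries an $\Alf_1$-label, or an integer time is skipped between two marked symbols), or $c \neq 1$ holds at a non-initial marked symbol (so two consecutive marked symbols are not exactly one time unit apart). On a genuine braid, successive marked positions are at the consecutive integers $0, 1, \ldots, \lfloor t_n \rfloor$ and the unmarked positions fill the open unit intervals between them, so neither tracking invariant is ever broken.

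For (3) I split on whether $t_i + 1 \leq t_n$ (Case A) or $t_i + 1 > t_n$ (Case B). Case A is recognised by guessing $i$, resetting $c$ at position $i$, testing $c \neq 1$ at every later symbol, and finally $c > 1$ at the last symbol. Case B is more subtle: after guessing $i$ and resetting $c$, the automaton requires $c \neq 1$ at every later symbol and additionally nondeterministically picks a later marked symbol $k$ with $c \in (0,1)$. Soundness of Case B on braids is the easy direction: such a $k$ would be a marked (hence integer-time) position strictly between $t_i$ and $t_i + 1$, which forces $t_i$ to be non-integer with $\lfloor t_n \rfloor > t_i$; the braid property would then place a later symbol at time $t_i + 1$, contradicting $c \neq 1$ there.

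The main obstacle is completeness of the Case B branch: I must argue that if $w$ fails only by Case B at some non-integer $t_i$ (so (1), (2) and Case A of (3) all hold), then such a marked witness $k$ really exists. For this I would observe that iterating (3) forward from $t_1 = 0$ forces every integer $0, 1, \ldots, \lfloor t_n \rfloor$ to appear as a time stamp, and by correctness of the marking each such integer is marked. In particular $\lfloor t_i \rfloor + 1$ occurs at a marked later position $k$, and it must coincide with $\lfloor t_n \rfloor$ (since $t_i + 1 > t_n$ forces $\lfloor t_n \rfloor = \lfloor t_i \rfloor + 1$), so $t_k - t_i = 1 - \ulamek{t_i} \in (0,1)$, exactly the witness the Case B branch demands. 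Assembling the five branches into one nondeterministic one-clock automaton then yields a recognizer for the complement of the language of timed braids.
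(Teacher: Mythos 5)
Your proposal is correct and follows essentially the same route as the paper: enumerate the ways a word can fail to be a timed braid and let a nondeterministic one-clock automaton guess which violation to certify, with your branches (1)--(2) matching the paper's deterministic checks of the initial stamp and the marked/integer correspondence, and your Case~A/Case~B split corresponding exactly to the paper's single missing-timestamp check (``the clock becomes strictly greater than $1$, \emph{or the word ends}'' after the next marked position). The extra care you take in arguing completeness of Case~B (that the non-firing of the other branches forces the marked positions to sit at the consecutive integers $0,1,\dots,\lfloor t_n\rfloor$, yielding the marked witness $k$) is a detail the paper leaves implicit, and it checks out.
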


\begin{proof}
A timed word fails to be a timed braid if{f}: (0) the time-stamp in the first position is not equal $0$; (1) there is some marked position appearing at some non-integer position; (2) there is some unmarked position appearing at some integer position; (3) there is some missing position with time-stamp $t$ induced by the existence of an element $(a,t-1.0)$ for some $a$.

It is easy to check that (0), (1) and (2) can be verified by a deterministic one clock automaton. (3) is verified as follows. The automaton guesses a position $i$, where it resets the clock.
Then it checks that after the next marked position, the clock is continuously strictly smaller than $1$ until it finally becomes 
strictly greater than $1$, or the word ends. 
\end{proof}

As a consequence of Lemma~\ref{l:mtb},
the language  of timed braids is recognized by an alternating one clock automaton. This is due to the fact that this model is closed under complementation.

The following theorem is proved analogously to Theorem~\ref{t:timered2reg}, using Theorem~\ref{t:reg2time} and Lemma~\ref{l:mtb}:
\begin{theorem}
\label{t:regred2time}
The following decision problems for register automata:
language inclusion, language equality,  nonemptiness and universality,
reduce to the analogous problems for timed automata.
The reductions keep the number of clocks equal the number of registers,
and preserve the mode of computation (nondeterministic, alternating)
of the input automata.
\end{theorem}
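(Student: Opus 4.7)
The plan is to mirror the proof of Theorem~\ref{t:timered2reg}, with the roles of timed and register automata swapped. I focus on the inclusion problem, since the other three problems are either its special cases or reduce to the same construction. Starting from two register automata $\aut$ and $\baut$ (both nondeterministic, or both alternating), Theorem~\ref{t:reg2time} provides two timed automata $\aut'$ and $\baut'$ with as many clocks as $\aut$, $\baut$ have registers, and in the same mode of computation. Lemma~\ref{l:mtb} further provides a one-clock nondeterministic automaton $\aut_{\neg\mathrm{tb}}$ recognising the complement of the language of timed braids.

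The driving claim is that
\[
\lang{\aut} \subseteq \lang{\baut} \quad \text{if{f}} \quad \lang{\aut'} \subseteq \lang{\baut'} \cup \lang{\aut_{\neg\mathrm{tb}}}.
\]
The ``if'' direction is straightforward: for $w \in \lang{\aut}$, Theorem~\ref{t:reg2time} yields $\mtb{w} \in \lang{\aut'}$; since $\mtb{w}$ is by construction a timed braid, the assumption forces $\mtb{w} \in \lang{\baut'}$, so that $w \in \lang{\baut}$ by Theorem~\ref{t:reg2time} applied in the other direction.

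The ``only if'' direction is where the real work lies, exactly because $\mtb{\_}$ is not surjective onto timed braids: a generic timed braid $w$ may contain extra $\tik$/$\bar\tik$ positions at times that do not correspond to any ``real'' data value carried over from an earlier factor. Following the template of Claim~\ref{claim:surjective-databraid}, I will introduce a notion of \emph{useless} position in a timed braid — essentially a $\tik$/$\bar\tik$ position whose time stamp $t$ is such that $t-1$, $t-2$, \ldots are either absent from the prefix or themselves labelled by $\tik$/$\bar\tik$, together with trailing factors populated only by ticks — and show two things: (i) the word $\widetilde{w}$ obtained from $w$ by deleting all useless positions equals, up to time isomorphism (Proposition~\ref{p:tperm}), $\mtb{v}$ for some data word $v$, where $v$ is read off by erasing the ticks from $\widetilde{w}$; and (ii) neither $\aut'$ nor $\baut'$ distinguishes $w$ from $\widetilde{w}$. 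Given (i) and (ii), the argument closes as in Theorem~\ref{t:timered2reg}: from $\widetilde{w} \in \lang{\aut'}$ (via (ii) and $w \in \lang{\aut'}$), and so $v \in \lang{\aut} \subseteq \lang{\baut}$ by Theorem~\ref{t:reg2time}; then $\mtb{v} \in \lang{\baut'}$, hence $\widetilde{w} \in \lang{\baut'}$, and finally $w \in \lang{\baut'}$ by (ii) again.

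The main obstacle will be step (ii). It requires inspecting the construction in Theorem~\ref{t:reg2time}: at an input symbol $\tik$ or $\bar\tik$ the automata $\aut'$ and $\baut'$ modify only the register component $X$ of a state and possibly reset clocks via the constraint $r = 1$; at a useless position this update is either not triggered or concerns registers whose value is never subsequently tested before being overwritten, so acceptance is preserved. The remaining properties — exponential-time computability, preservation of the mode of computation, and matching of clock and register counts — are inherited from Theorem~\ref{t:reg2time}, while $\aut_{\neg\mathrm{tb}}$ contributes only one extra clock and, being nondeterministic, fits within the alternating mode without increasing the alternation level of the reduction.
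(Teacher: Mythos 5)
Your proposal follows the paper's proof essentially step for step: the same reduction $\lang{\aut}\subseteq\lang{\baut}$ if and only if $\lang{\aut'}\subseteq\lang{\baut'}\cup\lang{\aut_{\neg\mathrm{tb}}}$, the same easy ``if'' direction, and the same treatment of the non-surjectivity of $\mtb{\_}$ by deleting \emph{useless} positions of the same two kinds (tick positions whose fractional part has so far occurred only at ticks, and all-tick stretches), with the indistinguishability argument resting on the same inspection of how the construction of Theorem~\ref{t:reg2time} behaves on $\tik/\bar\tik$. The one divergence is that the paper declares useless any position $i$ such that the entire unit window $[t_i,t_i+1)$ is labelled by $\tik$ or $\bar\tik$, rather than only trailing all-tick factors, but this is a detail at the level of precision the paper's own sketch leaves implicit.
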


\begin{proof}[Sketch]
The proof is very similar to the proof of Theorem~\ref{t:timered2reg}.
We need to deal with a minor difficulty of the same kind: 
the mapping $\mtb{\_}$ is not surjective, up to isomorphism, onto  timed braids.
Thus we need the following claim, similar to one used in the proof of  
Theorem~\ref{t:timered2reg}:
\begin{claim}
For each timed braid $w$ there is a data word $v$ such that the automata 
$\aut'$ and $\baut'$ cannot distinguish
between $w$ and $\mtb{v}$.
\end{claim}
The claim is demonstrated similarly as before, by identifying \emph{useless} positions
in a  timed braid.
First, a position $i$ is \emph{useless} if{f} (1) 
it is labeled with $\pair{\tik}{t}$, for some (necessarily non-integer) $t$, and
all positions before $i$ carrying time-stamps with the same fractional part as $t$ 
are labeled with $\tik$; or  (2) position $i$ carries time stamp $t$ and all positions carrying time-stamps 
in the interval $[t, t+1)$ are labeled exclusively with $\tik$ or $\bar{\tik}$.

Similarly as before, the automaton obtained by Theorem~\ref{t:reg2time}
cannot tell the difference when all useless positions are removed from a word.
To see this, observe that in case (1) no clock will be reset as the constraint
$r = 1$ will not hold for any clock $r$.
In case (2), the whole segment of length $1$ labeled exclusively by $\tik$ and $\bar{\tik}$
may be safely skipped as it has no impact on state of the automaton.
\end{proof}


\section{Applications}\label{sec:applications}

Here we provide some evidence that the tight relationship between register
automata and timed automata may be useful: we transfer a couple of results
from timed do data setting.
First, from the fact that 1-clock alternating timed automata have
decidable emptiness~\cite{LW08}, applying Theorem~\ref{t:regred2time} we obtain:
\begin{theorem}
The emptiness problem for alternating one register automata is decidable.
\end{theorem}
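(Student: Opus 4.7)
The plan is to derive the theorem as a direct corollary of Theorem~\ref{t:regred2time} combined with the result of~\cite{LW08,OW05} that emptiness is decidable for alternating one-clock timed automata.

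Concretely, given an alternating one-register automaton $\aut$, I would first invoke Theorem~\ref{t:regred2time} to reduce the emptiness problem for $\aut$ to the emptiness problem for some timed automaton $\baut$. Because that theorem both keeps the number of clocks equal to the number of registers and preserves the mode of computation, $\baut$ is an alternating one-clock timed automaton satisfying $\lang{\aut} \neq \emptyset$ if and only if $\lang{\baut} \neq \emptyset$. I would then appeal to the decidability of emptiness for alternating one-clock timed automata~\cite{LW08,OW05} and conclude.

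There is essentially no remaining obstacle, since all the required machinery was set up in the previous sections. If one wanted to unfold the reduction, the two ingredients to combine are the simulation provided by Theorem~\ref{t:reg2time} (which produces an alternating one-clock timed automaton $\aut'$ accepting $\mtb{w}$ whenever $\aut$ accepts $w$) and the alternating one-clock automaton recognizing the language of timed braids (obtained from Lemma~\ref{l:mtb} by complementation, using closure of the alternating one-clock model under complement). One checks, exactly as in the proof of Theorem~\ref{t:regred2time} via the ``essentially onto'' character of $\mtb{\_}$ and the analysis of useless positions, that $\lang{\aut} \neq \emptyset$ if and only if $\lang{\aut'}$ contains some timed braid. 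Intersecting $\aut'$ with the braid recognizer is performed by a conjunction at the initial state that reuses the single clock variable, so the resulting automaton remains alternating and uses only one clock, which is precisely the fragment covered by the decidability result of~\cite{LW08,OW05}.
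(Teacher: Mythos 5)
Your proposal is correct and follows exactly the paper's own argument: the theorem is obtained by applying Theorem~\ref{t:regred2time} (instantiated for nonemptiness, which preserves the clock/register count and the alternating mode) and then invoking the decidability of emptiness for alternating one-clock timed automata from~\cite{LW08}. Your additional unfolding of the reduction via Theorem~\ref{t:reg2time}, Lemma~\ref{l:mtb}, and the uselessness analysis matches how Theorem~\ref{t:regred2time} is itself established, so nothing further is needed.
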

Note that register automata, as we define it here, work over
\emph{ordered} data domains and are capable of comparing data values w.r.t.~$\doeq$.
According to our terminology, decidability was only known for the subclass of order-blind automata~\cite{DL09}.
Interestingly, the results holds for \emph{any} total order over data.

\paragraph{A note on complexity}
Note that alternating 1 register automata over an ordered domain have the same complexity (modulo an exponential time reduction) as alternating 1 clock timed automata. However, we must remark that these automata have a much higher complexity than \emph{order blind} alternating 1 register automata, although both are beyond the primitive recursive functions. While the latter can be roughly bounded by the  Ackermann function applied to the number of states, the complexity of the former majorizes every multiply-recursive function (in particular, Ackermann's). 

More precisely, the emptiness problem for alternating timed  automata with 1 clock sits in the class $\frak F_{\omega^\omega}$ in the Fast Growing  Hierarchy \cite{fast}---an extension of the Grzegorczyk Hierarchy for non-primitive recursive functions---by a reduction to Lossy Channel Machines \cite{ADOW05}, which are known to be `complete' for this class, i.e. in $\frak F_{\omega^\omega} \setminus \frak F_{<\omega^\omega}$ \cite{CS-lics08}. 
However, the emptiness problem for \emph{order blind} alternating 1 register automata belongs to $\frak F_\omega$ in the hierarchy, by a reduction to Incrementing Counter Automata \cite{DL09}, which are complete for $\frak F_\omega$ \cite{phs-mfcs2010,FFSS10}. We then obtain the following result.
\begin{corollary}
  The emptiness problem of alternating 1 register automata over a linearly ordered domain is in $\frak F_{\omega^\omega} \setminus \frak F_{<\omega^\omega}$ in the Fast Growing Hierarchy.
\end{corollary}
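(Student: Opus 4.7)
The plan is to derive the corollary directly from the equivalence between alternating 1 register automata over a linearly ordered domain and alternating 1 clock timed automata established by Theorems~\ref{t:regred2time} and~\ref{t:timered2reg}, combined with the already-known precise location of the emptiness problem for the latter in the Fast Growing Hierarchy.

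For the upper bound (membership in $\frak F_{\omega^\omega}$), I will apply Theorem~\ref{t:regred2time}: starting from an alternating 1 register automaton $\aut$, it produces in exponential time an alternating 1 clock timed automaton $\baut$ whose nonemptiness is equivalent to that of $\aut$ (since the mode of computation is preserved and a single register is mapped to a single clock). Since emptiness of alternating 1 clock timed automata lies in $\frak F_{\omega^\omega}$ by the reduction to Lossy Channel Machines recalled above~\cite{ADOW05,CS-lics08}, and since this class is closed under elementary reductions, membership transfers to the register side.

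For the lower bound (non-membership in $\frak F_{<\omega^\omega}$), I will apply Theorem~\ref{t:timered2reg} in the opposite direction: any alternating 1 clock timed automaton reduces in exponential time to an alternating 1 register automaton, preserving nonemptiness. Since the source problem is \emph{complete} for $\frak F_{\omega^\omega}$, and in particular not in $\frak F_{<\omega^\omega}$~\cite{CS-lics08}, and since elementary reductions cannot drop a problem from $\frak F_{\omega^\omega}$-hardness down to a strictly lower level of the hierarchy, the target problem must also lie outside $\frak F_{<\omega^\omega}$.

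The only point that requires care is verifying that the exponential blow-ups incurred in both reductions are absorbed by the coarseness of the class $\frak F_{\omega^\omega}$; this is a standard closure property of levels of the Fast Growing Hierarchy at and above $\frak F_\omega$ under elementary reductions, so no genuine new technical work is needed beyond invoking it. The result is thus a fairly direct consequence of the two reductions of this paper together with the prior sharp-complexity characterisation of emptiness for alternating one clock timed automata.
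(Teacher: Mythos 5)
Your proposal is correct and follows essentially the same route as the paper: the paper likewise obtains the corollary by noting that, modulo the exponential-time reductions of Theorems~\ref{t:regred2time} and~\ref{t:timered2reg}, alternating 1 register automata over an ordered domain have the same emptiness complexity as alternating 1 clock timed automata, whose problem sits in $\frak F_{\omega^\omega} \setminus \frak F_{<\omega^\omega}$ via Lossy Channel Machines. Your explicit remark that the exponential blow-up is absorbed by closure of these levels under elementary reductions is left implicit in the paper but is exactly the right point to check.
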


\medskip

We show another example of a result that can be directly 
copied from the timed  to the data setting:
\begin{theorem}
\label{t:inclu}
Consider the following inclusion problem: Given a nondeterministic register automaton $\aut$
(with possibly many registers) and an alternating one register automaton $\baut$, 
is every word accepted by $\aut$ also accepted by $\baut$? This
problem is decidable 
and of non-primitive recursive complexity.
\end{theorem}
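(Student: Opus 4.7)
The plan is to transfer the problem to the timed setting via Theorem~\ref{t:regred2time}, and then invoke the decidability of the analogous inclusion problem for timed automata shown in \cite{LW08} and recalled in the introduction.

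First I would apply Theorem~\ref{t:reg2time} to $\aut$ and to $\baut$ separately, obtaining a nondeterministic timed automaton $\aut'$ with as many clocks as $\aut$ has registers, and an alternating one clock timed automaton $\baut'$. Following the argument used to derive Theorem~\ref{t:regred2time} from Theorem~\ref{t:reg2time} and Lemma~\ref{l:mtb} (symmetric to the proof of Theorem~\ref{t:timered2reg}), the inclusion $\lang{\aut} \subseteq \lang{\baut}$ is equivalent to $\lang{\aut'} \subseteq \lang{\baut'} \cup \lang{\aut_{\neg\text{tb}}}$, where $\aut_{\neg\text{tb}}$ is the nondeterministic one clock automaton from Lemma~\ref{l:mtb} recognising the complement of the language of timed braids. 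Since $\baut'$ is alternating one clock, $\aut_{\neg\text{tb}}$ is nondeterministic one clock, and alternating one clock automata are closed under union (by a trivial initial disjunction), $\lang{\baut'} \cup \lang{\aut_{\neg\text{tb}}}$ is recognised by a single alternating one clock timed automaton. Decidability then follows immediately from the \cite{LW08} result, which precisely settles inclusion of a nondeterministic many clock timed automaton inside an alternating one clock one.

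For the non-primitive recursive lower bound I would reduce from the universality problem for alternating one register automata. This class is closed under complement, by dualising the transition formulae and swapping accepting and non-accepting states; hence universality of an alternating one register $\baut$ is equivalent to nonemptiness of an alternating one register automaton recognising the complement of $\lang{\baut}$. The latter has non-primitive recursive complexity by the first theorem of this section. But universality of $\baut$ is the special case of the inclusion in question obtained by letting $\aut$ be a trivial (even deterministic) one register automaton accepting every data word, so the inclusion problem inherits the lower bound.

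The hard part is essentially bookkeeping: I must check that Theorem~\ref{t:regred2time} cooperates uniformly when the two register automata in the inclusion have different modes of computation (nondeterministic and alternating). Inspecting the constructions of Theorem~\ref{t:reg2time} and Lemma~\ref{l:mtb}, this is unproblematic, since each automaton is processed independently with its mode preserved, and the auxiliary one clock automaton for the complement of timed braids is freely absorbed into the right-hand side by closure of alternating automata under union.
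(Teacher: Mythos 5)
Your proposal is correct, and the decidability half coincides with the paper's: the paper simply invokes Theorem~\ref{t:regred2time} (whose proof is exactly the reduction to $\lang{\aut'} \subseteq \lang{\baut'} \cup \lang{\aut_{\neg\text{tb}}}$ that you spell out, with the union absorbed into a single alternating one clock automaton) and then the decidability result of \cite{LW08}. Where you genuinely diverge is the lower bound. The paper obtains it by going in the \emph{other} direction: the inclusion problem for timed automata (nondeterministic many clocks versus alternating one clock) is already known to be non-primitive recursive by \cite{LW08}, and Theorem~\ref{t:timered2reg} reduces that timed problem to the register inclusion problem, so hardness transfers. You instead stay entirely in the data setting: you specialize the inclusion to universality of the alternating one register automaton $\baut$ (taking $\aut$ trivially universal), and reduce universality to emptiness via closure of alternating register automata under complementation, inheriting the non-primitive recursive hardness of alternating one register emptiness. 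Both routes are sound; yours has the virtue of not needing the timed-to-register direction at all, while the paper's choice showcases the two-way transfer machinery that is the point of the section. One small correction: the hardness of emptiness for alternating one register automata is not given by ``the first theorem of this section'' (that theorem asserts only decidability); it comes from the known lower bound of \cite{DL09} for the order-blind subclass, as recalled in the introduction and in the complexity discussion of Section~\ref{sec:applications}. With that attribution fixed, your argument stands.
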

The result is immediate as the analogous problem is decidable and non-primitive recursive
for timed automata~\cite{LW08}.
Decidability follows from Theorem~\ref{t:regred2time}
while the lower bound follows from Theorem~\ref{t:timered2reg}.

As the last example of application of our results, we consider the emptiness problem
over a restricted class of data words. We say that a data word $\pair{a_1}{d_1} \ldots \pair{a_n}{d_n}$ is \emph{m-decreasing} 
if{f} there are at most $m-1$ positions $i$ with $d_i \odeq d_{i+1}$.
\begin{theorem}
\label{t:bound}
Let $m$ be a fixed nonnegative number.
The non-emptiness problem for alternating register automata over $m$-decreasing data words
is decidable and non-elementary.
\end{theorem}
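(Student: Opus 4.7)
The plan is to reduce to the non-emptiness problem for alternating timed automata over a bounded time domain, which is decidable by~\cite{JORW10}, and to show that the same kind of reduction suffices for the non-elementary lower bound.

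First I would apply Theorem~\ref{t:reg2time} to produce from $\aut$ an alternating timed automaton $\baut$ such that $\aut$ accepts a data word $w$ iff $\baut$ accepts $\mtb{w}$. The crucial observation is a tight correspondence between $m$-decreasing data words and time-bounded timed braids: by construction of $\mtb{\_}$, if the ordered partition of $w$ has $k$ factors then all time stamps of $\mtb{w}$ lie in $[0,k)$; therefore $w$ is $m$-decreasing iff $\mtb{w}$ is a timed braid with all time stamps in $[0,m)$. It thus suffices to test whether $\baut$ accepts some timed braid with time stamps below $m$. Using Lemma~\ref{l:mtb} and closure of alternating automata under complement, the language of timed braids is recognizable by an alternating one-clock automaton; intersecting this with a single-clock bound enforcing time below $m$, and with $\baut$ itself (for free, thanks to alternation), yields an alternating timed automaton whose emptiness over a bounded time domain is decidable by~\cite{JORW10}.

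The main obstacle is the same as in Theorem~\ref{t:regred2time}: the map $\mtb{\_}$ is not surjective onto timed braids, so a timed braid accepted by $\baut$ need not be of the form $\mtb{v}$. I would handle this by the \emph{useless position} argument used there, pruning an accepted timed braid $u$ into some $\tilde u = \mtb{v}$ (up to isomorphism) for a data word $v$, without $\baut$ being able to notice the pruning; since removing useless positions never increases the number of factors of the underlying ordered partition, if $u$ has time stamps below $m$ then $v$ is $m$-decreasing, and $\aut$ accepts $v$ as required.

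For the non-elementary lower bound, I would run the construction in the opposite direction: Theorem~\ref{t:time2reg} turns an alternating timed automaton over a time domain bounded by $m$ into a register automaton whose accepting data braids have at most $m$ factors in their ordered partition, and hence are $m$-decreasing. A symmetric useless-position argument then shows that this reduction preserves non-emptiness when restricted to the relevant subclass, so the non-elementary complexity of bounded-time alternating timed automata~\cite{JORW10} transfers directly to our setting.
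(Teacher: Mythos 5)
Your proposal is correct and follows essentially the same route as the paper: the upper bound via Theorem~\ref{t:reg2time} and the observation that $w$ is $m$-decreasing iff $\mtb{w}$ is $m$-bounded (hence reducible to the bounded-time emptiness problem of~\cite{JORW10}), and the lower bound via Theorem~\ref{t:time2reg} and the symmetric fact for $\modb{\_}$. The paper states this as an immediate consequence, whereas you additionally spell out the braid-language intersection and the useless-position pruning (including that pruning does not increase the number of factors), which are the details implicitly inherited from Theorems~\ref{t:regred2time} and~\ref{t:timered2reg}.
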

Again, the result is an immediate consequence of the result of~\cite{JORW10}:
emptiness of alternating timed automata over $m$-bounded timed words
is decidable and non-elementary, where $m$-bounded words are those
with all time stamps smaller than $m$.
The upper bound follows by Theorem~\ref{t:reg2time} and from the fact
that a data word $w$ is $m$-decreasing if{f} $\mtb{w}$ is $m$-bounded.
Conversely, the lower bound follows from Theorem~\ref{t:time2reg} and from the 
symmetric fact: a timed word is $m$-bounded if{f} $\modb{w}$ is
$m$-decreasing.

Finally, note that the upper (decidability) bounds of Theorems~\ref{t:inclu}
and~\ref{t:bound} also apply to
the class of order-blind register automata.


\section{Discussion}

We have shown that timed and register automata on finite words are essentially equivalent. In order to relate these two models we introduced the notion of a `braid'-like structure, that corresponds naturally to the way a clock works when running over a timed word. As shown, most decision problems are actually equivalent for these two models. This work can be useful to derive results on one model of automata as corollaries of results on the other model, by exploiting the duality between timed and data automata shown here. This is evidenced here by showing some new results (Section~\ref{sec:applications}).
One limitation of our results is the both translations between timed and register automata
suffer of an exponential blow-up in the size of the automaton.
As a consequence, one should not expect applicability to low-complexity problems,
like non-emptiness of timed or register automata, in both cases a PSPACE-complete problem.

As a relatively straightforward further step, we plan to investigate automata over $\omega$-words; 
in particular, we hope for transferring results of~\cite{PW09,OW06} to register automata
and for comparing them e.g. with those of~\cite{L08}.
It seems however that one must restrict to dense orders on data domain for proper treatment
of infinite words.
Furthermore, we also plan to attempt a similar comparison of logical formalisms,
namely of freeze LTL and MTL or TPTL, both over finite and $\omega$-words.
This should allow to compare or transfer the complexity result for 
syntactic fragments of freeze LTL and MTL that appeared recently in 
the literature, see e.g.~\cite{PW09,OW06,L06,FS09}.


\vspace{-.7em}
\paragraph{Acknowledgements} The authors thank the anonymous reviewers for their valuable comments. 
\vspace{-1em}

\bibliographystyle{eptcs}
\bibliography{timedata}

\begin{thebibliography}{10}
\providecommand{\bibitemstart}[1]{\bibitem{#1}}
\providecommand{\bibitemend}{}
\providecommand{\bibliographystart}{}
\providecommand{\bibliographyend}{}
\providecommand{\url}[1]{\texttt{#1}}
\providecommand{\urlprefix}{Available at }
\providecommand{\bibinfo}[2]{#2}
\bibliographystart

\bibitemstart{ADOW05}
\bibinfo{author}{Parosh~Aziz Abdulla}, \bibinfo{author}{Johann Deneux},
  \bibinfo{author}{Jo{\"e}l Ouaknine} \& \bibinfo{author}{James Worrell}
  (\bibinfo{year}{2005}): \emph{\bibinfo{title}{Decidability and Complexity
  Results for Timed Automata via Channel Machines}}.
\newblock In: {\sl \bibinfo{booktitle}{ICALP}}, pp.
  \bibinfo{pages}{1089--1101}.
\newblock \urlprefix\url{http://dx.doi.org/10.1007/11523468_88}.
\bibitemend

\bibitemstart{AD94}
\bibinfo{author}{Rajeev Alur} \& \bibinfo{author}{David~L. Dill}
  (\bibinfo{year}{1994}): \emph{\bibinfo{title}{A theory of timed automata}}.
\newblock {\sl \bibinfo{journal}{Theoretical Computer Science}}
  \bibinfo{volume}{126}, pp. \bibinfo{pages}{183--235}.
\bibitemend

\bibitemstart{CS-lics08}
\bibinfo{author}{Pierre Chambart} \& \bibinfo{author}{{\relax Ph}ilippe
  Schnoebelen} (\bibinfo{year}{2008}): \emph{\bibinfo{title}{The Ordinal
  Recursive Complexity of Lossy Channel Systems}}.
\newblock In: {\sl \bibinfo{booktitle}{LICS}}, \bibinfo{publisher}{{IEEE}
  Computer Society Press}, pp. \bibinfo{pages}{205--216}.
\newblock
  \urlprefix\url{http://www.lsv.ens-cachan.fr/Publis/PAPERS/PDF/CS-lics08.pdf}.
\bibitemend

\bibitemstart{DL09}
\bibinfo{author}{St{\'e}phane Demri} \& \bibinfo{author}{Ranko Lazi{\'c}}
  (\bibinfo{year}{2009}): \emph{\bibinfo{title}{{LTL} with the freeze
  quantifier and register automata}}.
\newblock {\sl \bibinfo{journal}{ACM Trans. Comput. Log.}}
  \bibinfo{volume}{10}(\bibinfo{number}{3}).
\newblock \urlprefix\url{http://doi.acm.org/10.1145/1507244.1507246}.
\bibitemend

\bibitemstart{FFSS10}
\bibinfo{author}{Diego Figueira}, \bibinfo{author}{Santiago Figueira},
  \bibinfo{author}{Sylvain Schmitz} \& \bibinfo{author}{{\relax Ph}ilippe
  Schnoebelen} (\bibinfo{year}{2010}): \emph{\bibinfo{title}{{A}ckermann and
  Primitive-Recursive Bounds with {D}ickson's Lemma}}.
\newblock {\sl \bibinfo{journal}{ArXiv e-prints}}
  \urlprefix\url{http://arxiv.org/abs/1007.2989}.
\bibitemend

\bibitemstart{FS09}
\bibinfo{author}{Diego Figueira} \& \bibinfo{author}{Luc Segoufin}
  (\bibinfo{year}{2009}): \emph{\bibinfo{title}{Future-Looking Logics on Data
  Words and Trees}}.
\newblock In: {\sl \bibinfo{booktitle}{MFCS}}, pp. \bibinfo{pages}{331--343}.
\newblock \urlprefix\url{http://dx.doi.org/10.1007/978-3-642-03816-7_29}.
\bibitemend

\bibitemstart{JORW10}
\bibinfo{author}{Mark Jenkins}, \bibinfo{author}{Jo{\"e}l Ouaknine},
  \bibinfo{author}{Alexander Rabinovich} \& \bibinfo{author}{James Worrell}
  (\bibinfo{year}{2010}): \emph{\bibinfo{title}{Alternating Timed Automata over
  Bounded Time}}.
\newblock In: {\sl \bibinfo{booktitle}{LICS}}, \bibinfo{publisher}{{IEEE}
  Computer Society Press}.
\bibitemend

\bibitemstart{KF94}
\bibinfo{author}{Michael Kaminski} \& \bibinfo{author}{Nissim Francez}
  (\bibinfo{year}{1994}): \emph{\bibinfo{title}{Finite-Memory Automata}}.
\newblock {\sl \bibinfo{journal}{Theor. Comput. Sci.}}
  \bibinfo{volume}{134}(\bibinfo{number}{2}), pp. \bibinfo{pages}{329--363}.
\bibitemend

\bibitemstart{LW08}
\bibinfo{author}{S{\l}awomir Lasota} \& \bibinfo{author}{Igor Walukiewicz}
  (\bibinfo{year}{2008}): \emph{\bibinfo{title}{Alternating timed automata}}.
\newblock {\sl \bibinfo{journal}{ACM Trans. Comput. Log.}}
  \bibinfo{volume}{9}(\bibinfo{number}{2}).
\bibitemend

\bibitemstart{L06}
\bibinfo{author}{Ranko Lazi{\'c}} (\bibinfo{year}{2006}):
  \emph{\bibinfo{title}{Safely Freezing {LTL}}}.
\newblock In: {\sl \bibinfo{booktitle}{FSTTCS}}, pp. \bibinfo{pages}{381--392}.
\newblock \urlprefix\url{http://dx.doi.org/10.1007/11944836_35}.
\bibitemend

\bibitemstart{L08}
\bibinfo{author}{Ranko Lazi{\'c}} (\bibinfo{year}{2008}):
  \emph{\bibinfo{title}{Safety alternating automata on data words}}.
\newblock {\sl \bibinfo{journal}{CoRR}} \bibinfo{volume}{abs/0802.4237}.
\newblock \urlprefix\url{http://arxiv.org/abs/0802.4237}.
\bibitemend

\bibitemstart{fast}
\bibinfo{author}{M.H. L\"ob} \& \bibinfo{author}{S.S. Wainer}
  (\bibinfo{year}{1970}): \emph{\bibinfo{title}{Hierarchies of number theoretic
  functions, {I}}}.
\newblock {\sl \bibinfo{journal}{Archiv f\"ur Mathematische Logik und
  Grundlagenforschung}} \bibinfo{volume}{13}, pp. \bibinfo{pages}{39--51}.
\bibitemend

\bibitemstart{NSV04}
\bibinfo{author}{Frank Neven}, \bibinfo{author}{Thomas Schwentick} \&
  \bibinfo{author}{Victor Vianu} (\bibinfo{year}{2004}):
  \emph{\bibinfo{title}{Finite state machines for strings over infinite
  alphabets}}.
\newblock {\sl \bibinfo{journal}{ACM Trans. Comput. Log.}}
  \bibinfo{volume}{5}(\bibinfo{number}{3}), pp. \bibinfo{pages}{403--435}.
\newblock \urlprefix\url{http://doi.acm.org/10.1145/1013560.1013562}.
\bibitemend

\bibitemstart{OW04}
\bibinfo{author}{Jo{\"e}l Ouaknine} \& \bibinfo{author}{James Worrell}
  (\bibinfo{year}{2004}): \emph{\bibinfo{title}{On the Language Inclusion
  Problem for Timed Automata: Closing a Decidability Gap}}.
\newblock In: {\sl \bibinfo{booktitle}{LICS}}, \bibinfo{publisher}{{IEEE}
  Computer Society Press}, pp. \bibinfo{pages}{54--63}.
\newblock
  \urlprefix\url{http://csdl.computer.org/comp/proceedings/lics/2004/2192/00/2%
1920054abs.htm}.
\bibitemend

\bibitemstart{OW05}
\bibinfo{author}{Jo{\"e}l Ouaknine} \& \bibinfo{author}{James Worrell}
  (\bibinfo{year}{2005}): \emph{\bibinfo{title}{On the Decidability of Metric
  Temporal Logic}}.
\newblock In: {\sl \bibinfo{booktitle}{LICS}}, \bibinfo{publisher}{{IEEE}
  Computer Society Press}, pp. \bibinfo{pages}{188--197}.
\newblock \urlprefix\url{http://dx.doi.org/10.1109/LICS.2005.33}.
\bibitemend

\bibitemstart{OW06}
\bibinfo{author}{Jo{\"e}l Ouaknine} \& \bibinfo{author}{James Worrell}
  (\bibinfo{year}{2006}): \emph{\bibinfo{title}{Safety Metric Temporal Logic Is
  Fully Decidable}}.
\newblock In: {\sl \bibinfo{booktitle}{TACAS}}, pp. \bibinfo{pages}{411--425}.
\newblock \urlprefix\url{http://dx.doi.org/10.1007/11691372_27}.
\bibitemend

\bibitemstart{PW09}
\bibinfo{author}{Pawe{\l} Parys} \& \bibinfo{author}{Igor Walukiewicz}
  (\bibinfo{year}{2009}): \emph{\bibinfo{title}{Weak Alternating Timed
  Automata}}.
\newblock In: {\sl \bibinfo{booktitle}{ICALP}}, pp. \bibinfo{pages}{273--284}.
\newblock \urlprefix\url{http://dx.doi.org/10.1007/978-3-642-02930-1_23}.
\bibitemend

\bibitemstart{phs-mfcs2010}
\bibinfo{author}{{\relax Ph}ilippe Schnoebelen} (\bibinfo{year}{2010}):
  \emph{\bibinfo{title}{Revisiting {A}ckermann-hardness for lossy counter
  systems and reset {P}etri nets}}.
\newblock In: {\sl \bibinfo{booktitle}{MFCS 2010}}, {\sl
  \bibinfo{series}{LNCS}} \bibinfo{volume}{6281},
  \bibinfo{publisher}{Springer}.
\newblock \urlprefix\url{http://www.lsv.ens-cachan.fr/~phs}.
\bibitemend

\bibliographyend
\end{thebibliography}

\end{document}